\theoremstyle{plain}
\newtheorem{thm}{Theorem}[section]
\newtheorem{lem}[thm]{Lemma}
\newtheorem{prop}[thm]{Proposition}
\newtheorem{corollary}{Corollary}[section]
\theoremstyle{remark}
\title{Estimation of Multivariate Discrete Hawkes Processes: \\ Application to Incident Monitoring}
\author{
  Trinnhallen Brisley \\
  The University of Edinburgh \\
  \texttt{T.Brisley@sms.ed.ac.uk} \\
   \And
  Gordon Ross \\
  The University of Edinburgh \\
  \texttt{Gordon.Ross@ed.ac.uk} \\
   \And
  Daniel Paulin \\
  The University of Edinburgh \\
  \texttt{Daniel.Paulin@ed.ac.uk} \\
  \And
  Jake Easto \\
  The National Health Service \\
  \texttt{Jake.Easto@ggc.scot.nhs.uk} \\
}
\newcommand{\Poisson}{\mathrm{Poisson}}
\newcommand{\Uniform}{\mathrm{Uniform}}
\begin{document}
\maketitle

\begin{abstract}
Hawkes processes are a class of self-exciting point processes that are used to model complex phenomena. While most applications of Hawkes processes assume that event data occurs in continuous-time, the less-studied discrete-time version of the process is more appropriate in some situations. In this work, we develop methodology for the efficient implementation of discrete Hawkes processes. We achieve this by developing efficient algorithms to evaluate the log-likelihood function and its gradient, whose computational complexity is linear in the number of events. We extend these methods to a particular form of a marked discrete multivariate Hawkes process which we use to model the occurrences of violent events within a forensic psychiatric hospital. A prominent feature of our problem, captured by a mark in our process, is the presence of an alarm system which can be heard throughout the hospital. An alarm is sounded when an event is particularly violent in nature and warrants a call for assistance from other members of staff. We conduct a detailed analysis showing that such a variant of the Hawkes process manages to outperform alternative models in terms of predictive power. Finally, we interpret our findings and describe their implications.
\end{abstract}

\keywords{Hawkes Process \and Multivariate \and Discrete \and Self-Exciting \and Intensity Function \and Maximum Likelihood Estimation \and Psychology \and Time Series \and Integer-Valued \and Psychiatric Hospital \and Social Interactions}

\section{Introduction}

In this work we study the timings of violent behavior of patients in a forensic psychiatric hospital. We attempt to understand more about the interactions between instances of violence and to build a model to accurately predict future incidents. The predictive capabilities could allow for a better design of protocols within the hospital such as more effective organisation of the daily routines inside the facility, e.g., meal times, mediation and treatment timings. There is evidence to suggest an increase in the levels of stress from such daily routines \citep{Hinsby,Meehan,Wright}. This analysis is complicated by the complex nature of the event data, which consists of event times logged from multiple different areas within the hospital, and which appears to exhibit event clustering, seasonality, and non-trivial interactions between events in different wards.

There have been some previous studies on timings of attacks within forensic psychiatric hospitals, however these studies utilised only univariate statistical methodologies. The works \citep{Beck,Barnard} give some evidence for such events occurring in 'bursts' or 'clusters' and \cite{Beck} examined the temporal patterning of violent and aggressive behavior on an inpatient psychiatric ward over a one-year period. The analysis indicated that the events occurred in a non-uniform fashion, with evidence to support the contention that periodic bursts of aggression. They use a Poisson probability table to examine if incident reports were uniformly distributed across days. Their results show that under a Poisson model, days involving more than three incident reports are highly unlikely to occur, and yet such bursts occurred frequently in the data. 

When thinking about potential seasonality of the timings of events, it is worth mentioning that many of the routines of the hospital such as meal times, mediation or treatment, might be leading to increased levels of stress thus making violent behaviour more likely \citep{Hinsby,Meehan,Wright}. There are also some articles which review the effectiveness of incident-reporting systems as a method of improving patient safety through organizational learning such as \cite{Stavropoulou}. The work by \cite{Barnard} suggest that a patient who engages in more than one disruptive incident is increasingly likely to initiate a violent act might indicate that these individuals should receive more staff intervention earlier. 

Hawkes processes, first introduced by \cite{hawkes1971spectra}, are point processes capturing the self-exciting behaviours of events that occur in clusters or bursts, where the occurrence of one event makes it more probable that further events will also occur. One of the first applications was found in the earthquake literature \cite{Earthquake} where large earthquakes are often followed by smaller tremors nearby, thus exhibiting such a self-exciting structure. Hawkes processes are successfully used in many application domains that share this self-exciting nature, such as neuroscience \citep{Neuro1,Neuro2}, terrorism \citep{white_porter, white_porter2} (in the univariate setting) and retail analytics \cite{Gordons}. 

While the Hawkes process seems well suited to modelling violent event data, there are several problems which arise. Firstly, the event times in our dataset were recorded by hand by on duty staff. Since there is uncertainty of the exact time of each event, a huge proportion of the timestamps of the events were rounded to the nearest five-minute time interval. That is, the true exact incident time is uncertain. As we can see in the following section, our dataset violates an assumption of the continuous Hawkes processes, namely that events cannot occur simultaneously. This implies that the discrete variant of the Hawkes process may be more suitable for this dataset. Compared to its continuous variant, the discrete version of the Hawkes process has been sparsely studied. One application of the (multivariate) discrete Hawkes process is \cite{Simple}. They evaluate the capability of discrete Hawkes processes to model daily mortality counts at distinct phases in the COVID-19 outbreak, indicating that such processes can describe very complex dynamic behaviour. However, as shown in Section \ref{Estimation}, when there is either a large time frame, or small discrete-time windows, the likelihood calculations needed for the model fitting procedure become computationally intractable. The data set analysed by \cite{Simple} only consisted of a few years of data recorded at daily intervals, which made model fitting substantially less expensive compared to the large time frame of our dataset where the events were observed over a nine year period and logged at five-minute frequencies, resulting in far more observation windows. Thus, we must develop more efficient methodology for fitting such processes, which has an influence on our modelling approach. In Appendix \ref{Hawkes estimation} and \ref{Hawkes forecasting}, we provide, in a similar manner as in the main body of the paper, efficient algorithms for fitting and simulating the standard non-marked discrete Hawkes process. We believe that this work may be more widely applicable and will potentially be of interest to readers. 

Further issues which separate our problem from more routine applications of Hawkes processes is the multivariate structure of the data. The hospital which we study consists of numerous wards, each of which records incidents separately. Since it is possible that violent events in one ward can trigger violent events in other wards, a multivariate process is necessary. Further, some violent events in the hospital result in an alarm being sounded, which is heard throughout the hospital and alerts everyone to the fact an event has occurred. These alarms are more likely to impact other wards compared to events which do not result in an alarm being sounded. As such, our model must be able to deal with the presence of an alarm system as a covariate/mark in the data.

The layout of this paper is as follows; Section \ref{Data} describes the hospital data which we will be modelling. Section \ref{Background} introduces the Hawkes process, and its discrete and multivariate discrete forms. In Section \ref{Model} we define the particular form of a marked discrete Hawkes process which we use to model the forensic hospital dataset and allows us to take into account the excitation effect of the alarms. Also, in this section, we introduce a strategy for regularizing the model. In later sections we will see a large increase in predictive performance with this regularization strategy. Section \ref{Estimation} presents some new methodology to allow for efficient estimation of general discrete Hawkes processes as well as for our specific model for the hospital data. We also present some theory on efficiently simulation of these models, which is necessary to obtain event forecasts. In Section \ref{Results}, we report the results from four different models applied to the hospital data. Finally, Section \ref{Conclusion} concludes with a summary of our contributions and a discussion of possible future developments. In Appendix \ref{Hawkes estimation} and \ref{Hawkes forecasting} we report algorithms for the estimation and forecasting of non-marked discrete Hawkes processes, which we believe to be more widely applicable.

\section{Hospital Incidents Data}\label{Data}

We consider a dataset consisting of event times representing violent behaviour that occurred across the eighteen areas within the hospital. These areas correspond to various wards and communal areas within the hospital. This data was recorded by employees at the hospital, anonymized by the assignment of a patient number so that no patients could be identified. Access to this anonymous dataset was provided by the National Health Service. The data covers a nine year period  between the dates 26$^{th}$ June 2012 to 30$^{th}$ September 2021. We only use data up to 1$^{st}$ January 2019 to avoid the need to model anomalous changes in behaviour due to COVID 19 which will not be relevant to normal operations. The hospital we study has four sections (i.e. distinct physical areas) each of which contains three wards with a communal area in each section, as illustrated in Figure \ref{fig:ward layout}. There is an indoor and an outdoor communal area which are available to the entire hospital. This hence means there are eighteen different locations in which violent events can occur. Table \ref{tab:no of events} provides the number of events in each section of the hospital over the entire time frame, across each area of the hospital. Excluding area C6, the numbers of events within the communal areas are very small in comparison to events within wards. Although this implies we should leave these areas out of our analysis, the presence of section and hospital wide communal areas could imply that there is still some information flow between different areas of the hospital even in the absence of an alarm being sounded. With regards to including section C6 in our model, since we're unable to know where exactly in the hospital an event in C6 occurred, we expect that the interaction between events in C6 with other sections will be inhomogeneous. On top of this, adding an additional section into our process will increase the number of parameters in our model, thus introducing identifiability and computational issues. 

A visual inspection of the event times shown in Figure \ref{fig:prelim Hawkes} suggests there  is burst like behavior in our dataset. We plot the event times for a selection of wards from each section of the hospital, given by the vertical lines. We can see quite clearly that there is clustering of events within the same ward. This could imply that a Hawkes process maybe a good model to fit to this data, due to its self-exciting nature. 

The dataset is challenging for a few reasons: firstly, the event times were recorded by hand by on duty staff in the hospital. Since there is uncertainty of the exact time of an event, due to staff having to react to an event and note down the times after it has been dealt with, the vast majority of data points were rounded to the nearest five-minute time interval. That is, the true exact incident time is uncertain. Due to the uncertainty of the exact event times, our dataset violates an assumption of the continuous Hawkes processes, namely that events cannot occur simultaneously. This is not assumed by the discrete Hawkes process. We illustrate this in Table \ref{tab:simultaneous}, which shows the the proportion of total events where there were multiple events occurring in the same five-minute interval.

  \begin{figure}
    \centering
    \includegraphics[width=12cm]{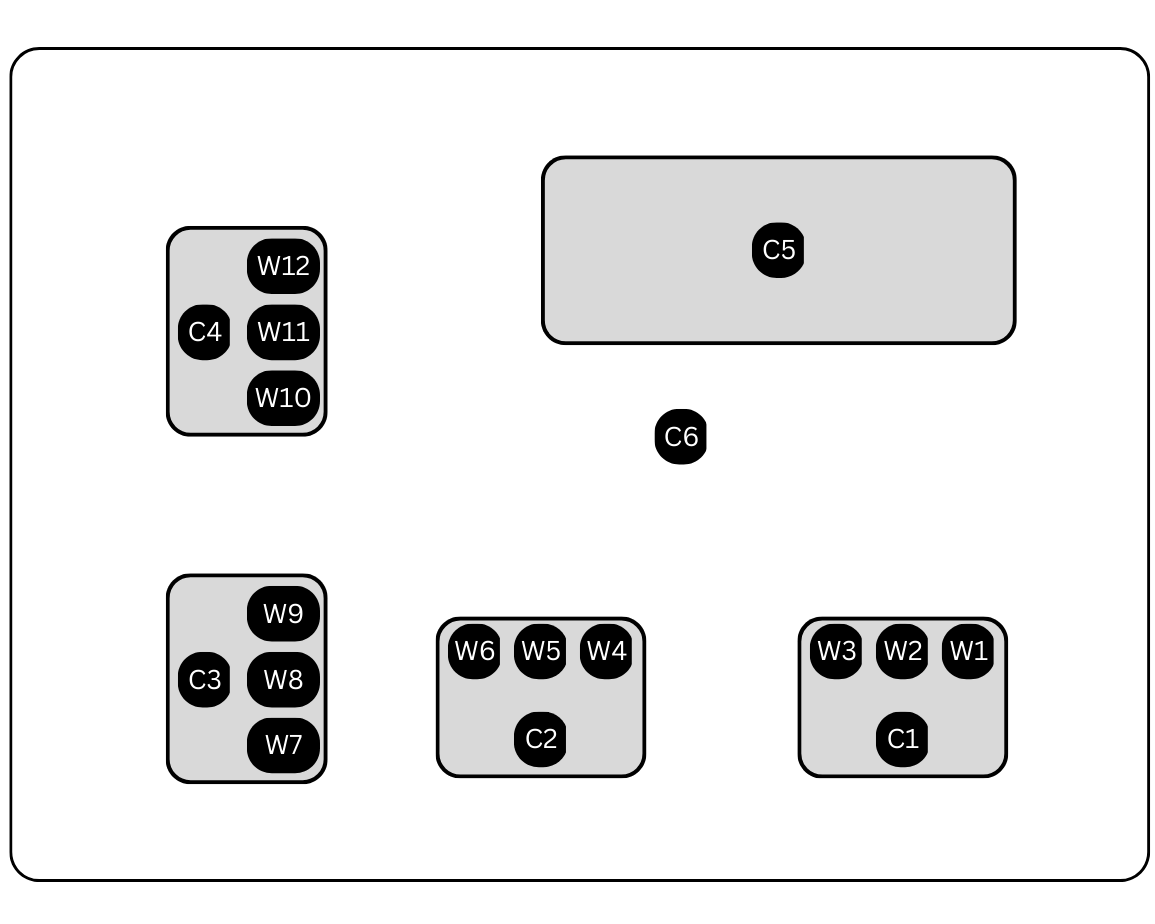}
    \caption{Ward layout: Sections W1, W2$,\ldots,$ W12 represent the wards, Sections C1, C2, C3 and C4 represent communal areas available for use by patients belonging to the wards inside the section, section C5 is a hospital-wide communal area and C6 represents the areas outside of the sections outlined previously. }
    \label{fig:ward layout}
\end{figure}

\begin{table}
\centering
\begin{tabular}{c c c c} 
\hline
Section & \multicolumn{1}{c}{\# Events}
& Section & \multicolumn{1}{c}{\# Events} \\
\hline
 C1 & 16 & W7 & 313\\
 W1 & 530 & W8 & 181\\
 W2 & 153 & W9 & 111\\
 W3 & 81 & C4 & 14\\
 C2 & 38 & W10 & 723\\
 W4 & 354 & W11 & 550\\
 W5 & 1882 & W12 & 82\\
 W6 & 247 & C5 & 14\\
 C3 & 2 &  C6 & 155\\
\hline
\end{tabular}
\caption{Total number of events recorded in each area of the hospital.}
\label{tab:no of events}
\end{table}

Secondly, with a Hawkes process \cite{hawkes1971spectra}, and with the discrete variant of which we use in this work, the occurrences of events are determined by an underlying intensity function which considers the influence from past events. The calculation of the intensity function requires time proportional to the number of time-stamps. It has been shown in continuous-time \cite{Ozaki}, and is analogously proven in discrete-time, that we can write the intensity function in terms of the cumulative mass function of the excitation kernel, defined to be a probability mass function. This allows us perform the calculation of the intensity function that requires time proportional to the number of events. Still, however, when the number of events is high, the repeated intensity function calculation will become expensive. 

We have some added information from the dataset that we use in this study. Present in the data is an indicator for whether the alarm was sounded in response to each event. The reason for an alarm being sounded is due to a particularly violent event, as determined by nearby on-shift employees, which requires urgent assistance from employees in other areas of the hospital. These alarms can be heard throughout the hospital, which implies that we need to model the added excitation due an event which triggered an alarm differently from a standard event. We can show this more concretely by performing some preliminary analysis of the events with and without alarms. More specifically, we can analyse the proportion of events that follow from an event that has/has-not caused an alarm to be sounded. One thing to note is that the effects could be due to the severity of the violent event that caused an alarm to be sounded, rather than the alarm itself. To attempt to isolate the effect of alarms, Figure \ref{tab:no of events after alarm} reports the number of events that occur in a different ward at a given amount of time after another event that has and has-not caused an alarm to be sounded. From Figure \ref{tab:no of events after alarm}, there is definitely reason to suspect quite significant effects due to the alarm being sounded rather than solely down to witnessing a particularly violent first hand.

\begin{figure}
\centering
\includegraphics[width=12cm]{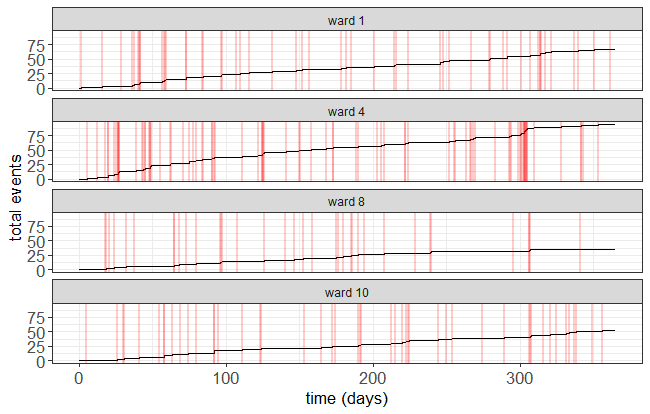}
\caption{Cumulative sum of events for a selection of wards over a one year period. Vertical lines represent the event times in the corresponding ward.}
\label{fig:prelim Hawkes}
\end{figure}

We next investigate seasonality in our dataset. There are papers which discuss seasonal characteristics in such data \citep{Beck,Peluola}. For example, \cite{Peluola} analysed a record of all incidents of violence over five years in a multilevel secure forensic hospital in Canada. There work indicated a higher occurrence of violence was recorded in the winter months compared with any other season and was related to unstructured activities. In Figure \ref{fig:prelim season}, we plot the range of event frequencies in each ward over each hour of the day along with a line representing the average we use in the model. As we can 

\begin{table}
\centering
\begin{tabular}{c c} 
\hline
N$^\textbf{o}$ simultaneous events & Proportion of total events \\
\hline
 1 & 88.78\% \\
 2 & 10.36\% \\
 3 & 0.68\% \\
 4 & 0.18\% \\
\hline
\end{tabular}
\caption{Proportion of total events where there were multiple events occurring in the same five-minute interval.}
\label{tab:simultaneous}
\end{table}

\begin{figure}[H]
\centering
\includegraphics[width=9cm]{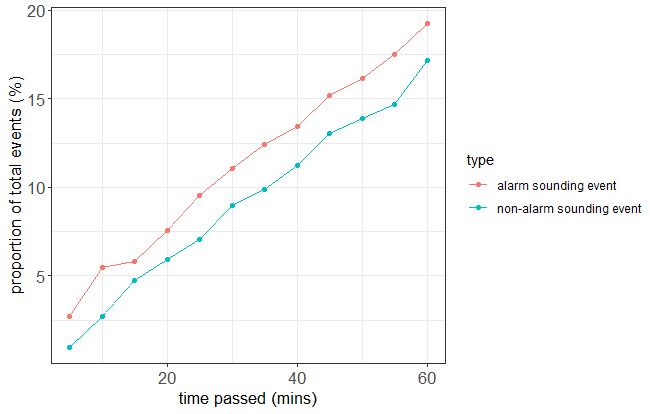}
\caption{Proportion of the total events that occur at a given amount of time immediately after another event which did/did not caused an alarm to be sounded. }
\label{tab:no of events after alarm}
\end{figure}

\begin{figure}[H]
\centering
\includegraphics[width=9cm]{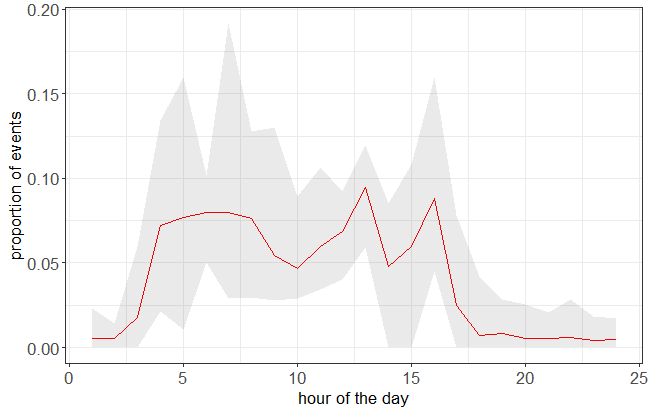}
\caption{Plot of the range of event frequencies over each ward for each our of the day with the centre line representing the sample mean. The shaded region represents the range of event frequencies over all wards in each hour of the day.}
\label{fig:prelim season}
\end{figure} 

see, their seems to be a pattern throughout the day. Namely, there are dips in activity late in the evening, during the night and during the early hours of the morning, where patients would patients would normally be in their own dorms. Note that we could not find any clear seasonal pattern across other time frames, thus have only included a time-of-day seasonal component in our model.

\section{Background}\label{Background}

Our aim is to develop a model for the number of events in each area of the forensic hospital. In order to deal with the discrete bins we use the discrete Hawkes framework, which we will introduce in this section. A marked discrete Hawkes process is an extension of the discrete Hawkes process that  describes not only the timings and locations of random events but also additional information attached to each event, known as a mark. In later sections we will use a specific form of the mark in the process to account for the additional information present in our data, that is an indicator to whether an alarm has been sounded. The remainder of this section reviews Hawkes process and its discrete and multivariate discrete variants.

\subsection{Hawkes Process}

Let $Y = \{(t_1, m(t_1)),(t_2, m(t_2)),\ldots, (t_n, m(t_n))\}$ denote a data stream where each event time $t_i\in \mathbb{R}$ has an associated mark $m(t_i)>0$ a function of event times. Then the marked Hawkes process $\text{N}$, where $\text{N}_t$ represents the number of events up to time, first introduced by \cite{hawkes1971spectra}, is as a temporal point process with intensity function conditioned on the history of events up to but not including time $t$, denoted by $\text{H}_t = \{Y_s :s < t\}$, as follows: for $t\in \mathbb{R}$,

\begin{align}\label{Hawkes eqn}
    \lambda(t) & =\mathbb{E}\{\text{N}_{t+h}-\text{N}_t|\text{H}_t\}\\
    &= \mu(t) + \sum_{i:t_i<t}{\Phi(t-t_i,m(t_i))},
\end{align} for some small $h>0$. Here $\mu(t)>0$ is the background rate and $t_i$ are the event times prior to time $t$. $m(t_i)$ effects how much the event will excite the process, that is how much the event $t_i$ increases the probability that another event occurs immediately after it (this is known as 'self-excitation') and $v(\cdot, \cdot)>0$ is a continuous excitation function that controls the extent to which events cluster together over time. This process describes a counting process where events increase the probability of further such events in the short term, leading to clustered events. 

\subsection{Discrete Hawkes Process}\label{univariate discete Hawkes}

Consider a marked univariate discrete Hawkes process \cite{Simple} denoted by $\text{N}$, where $\text{N}_t$ represents the number of events up to time interval $(t-1,t]$. $\text{N}_t$ is dependent on the history of events up to but not including time $t$, denoted by $\text{H}_{t-1} = \{Y_s:s \leq t-1\}$, where $Y_s$ represents the observed number of
events in a given time interval $(s-1,s]$. We denote $m(s)$ the associated mark at time interval $(s-1,s]$. Furthermore, $\text{N}_t-\text{N}_{t-1}$ represents the number of event occurrences at time $t\in \mathbb{N}$. Then, we define
	
\begin{align}\label{Discrete Hawkes eqn}
    \lambda(t) & =\mathbb{E}\{\text{N}_t-\text{N}_{t-1}|\text{H}_{t-1}\}\\
    &= \mu(t) + \sum_{i:t_i\leq t-1} {Y_{t_i} \Phi(t-t_i,m(t_i))}.
\end{align} Here $\mu(t)>0$ is the background rate and $t_i$ are the times prior to time $t$ such that an event occurred in the interval $(t_i-1,t_i]$. $m(t_i)$ effects how much the event will excite the process and $v(\cdot, \cdot)>0$ is a discrete excitation function that controls the extent to which events cluster together in time. We assume that $Y_t$ is Poisson distributed, $Y_t \sim \Poisson(\lambda(t))$.

\subsection{Multivariate Discrete Hawkes Process}\label{Multivariate discrete Hawkes}

Consider a marked multivariate discrete Hawkes process $\mathbf{N}$, where $\mathbf{N}_t=\{\text{N}^{(1)}_t,\text{N}^{(2)}_t\ldots,\text{N}^{(M)}_t\}$ represents the number of events up to time interval $(t-1,t]$ in each of the $M$ dimensions of the process. $\mathbf{N}_t$ is dependent on the history of events in every dimension up to but not including time $t$, denoted by $\mathbf{H}_{t-1} = \{\mathbf{Y}_s: s \leq t-1\}$, where $\mathbf{Y}_s=\{Y^{(1)}_s,Y^{(2)}_s, \ldots, Y^{(M)}_s\}$ represents the observed number of
events in a given time interval $(s-1,s]$ in each dimension of the process. We denote $m(s)$ the associated mark at time interval $(s-1,s]$. Furthermore, $\text{N}^{(m)}_t-\text{N}^{(m)}_{t-1}$ represents the number of event occurrences at time $t\in \mathbb{N}$ in dimension $m$. Then, we define

\begin{align}\label{Multivariate discrete Hawkes eqn}
    \lambda^{(m)}(t) & =\mathbb{E}\{\text{N}^{(m)}_t-\text{N}^{(m)}_{t-1}|\mathbf{H}_{t-1}\}\\
     &= \mu^{(m)}(t) + \sum_{l=1}^M \sum_{i:t_{l,i}\leq t-1}{Y^{(l)}_{t_{l,i}} \Phi_{l,m}(t-t_{l,i},m(t_{l,i}))},
\end{align} where $\mu^{(m)}(t)>0$ is the background rate in dimension $m$ and $t_{m,i}$ are the times in dimension $m$ such that prior to time $t$ an event occurred in the interval $(t_{m,i}-1,t_{m,i}]$. The function $m(t_{l,i})$ effects how much the $i^{th}$ event in dimension $l$ will excite the process and $\Phi_{l,m}(\cdot, \cdot)>0$ is a discrete excitation function that controls the extent to which events in dimension $l$ cluster together with events in dimension $m$. We call the increase in the probability of an event in one dimension from an event in another dimension 'cross-excitation'. We assume that, for each $m=1,\ldots,M$, $Y^{(m)}_t$ is Poisson distributed, $Y^{(m)}_t \sim \Poisson(\lambda^{(m)}(t))$.

\section{Model}\label{Model}

We model the occurrences of violent events in the hospital by using a specific form of the multivariate discrete Hawkes as in Section \ref{Multivariate discrete Hawkes}. Our model introduces a particular form of the discrete excitation function, background intensity and formulation of marks. We begin by discussing the choice of the excitation functions.

\subsection{Self- and Cross-Excitation}

Events in different areas of the hospital may occur in contemporaneous bursts, in that instances in a particular ward may be followed by further instances in the same or even another ward in the near future; these bursts can be a result directly witnessing said event, but also a result of hearing about the event via the flow of information via communal areas, which all wards, or particular groups of wards, have access to. This information implies we must not model the excitation in wards individually, but such that events in wards can increase the probability of instances across the hospital, i.e., using a multivariate Hawkes process. Although we have some prior information that wards are grouped into sections of three wards, due to the presence of hospital wide communal areas, it may be unwise to strictly force this structure into the model. We do not include the communal areas in the model since communal areas C1, C2, C3, C4 and C5 from Figure \ref{fig:ward layout} all have less than forty events, too little to fit an accurate model, and communal area C6 represents all other areas of the hospital which patients have access to. This means that two events registered in this section may in not be in a particular area of the hospital. Thus the relationship between an event in C6 and another area is inhomogeneous. Hence, we model the event occurrences using a twelve-dimensional discrete Hawkes process, corresponding to the twelve wards. We have next to discuss the choice of excitation kernels $\Phi_{l,m}(\cdot,\cdot)$ in Equation \eqref{Multivariate discrete Hawkes eqn}. It is common in the continuous-time literature to assume a functional form consisting of an exponential probability density function multiplied by some function of the marks \citep{[5], [11], [15], [22]}. It enforces a sudden increase in probability of event occurrences in the close future, which then exponentially reduces over time. This seems like, in this case, to be a sensible model for the self and cross-excitation behavior. Thus a geometric distribution, being an analog to the continuous-time exponential distribution, may be of interest in this application domain. On top of this, in Section \ref{Estimation}, as in the case of the continuous Hawkes process with the exponential kernel, we are able to write recursive algorithms for the likelihood function and its gradient, that making the estimation more efficient. We use the following form for the excitation kernel

\begin{equation}\label{excitation kernel form}
    \Phi_{l,m}(t-t_{l,i},m(t_{l,i}))=f_{l,m}(m(t_{l,i}))g_{l,m}(t-t_{l,i}), \quad \forall l,m=\{1,2,...,12\},
\end{equation} where $g_{l,m}(t)=\beta(1-\beta)^{t-1}$, the geometric distribution, which determines the shape of the decay function over time from ward $l$ to ward $m$ and $f_{lm}(\cdot)$ controls the extent to which events cluster together across wards $l$ to $m$ and is a function of the marks. Again, for the sake of efficient estimation we fix $\beta \in (0,1)$ to be the fixed over the wards. Letting this parameter be variable between the wards introduces an extra 143 parameters to the model, thus introducing identifiablity issues.

\subsection{Covariate Data}

In addition to the excitation exhibited in the same and between wards due to either directly observing the event or hearing about the event via word of mouth, there is an additional way information flows through the hospital, namely, via an alarm. The reason for an alarm being sounded is due to a particularly violent event, as determined by nearby on-shift employees, which requires urgent assistance from employees in other areas of the hospital. These alarms can be heard throughout the hospital, which implies that we need to model the added excitation due an event which triggered an alarm differently from a standard event. Present in the data is an indicator for whether the alarm was sounded in response to each event. We implement this data into our model via a mark. As such, we define the function $f_{l,m}(\cdot)$ as follows, 

\begin{equation}\label{K form}
    f_{l,m}(m(t_{l,i}))=K_{l,m}+\alpha_{l,m} \mathbbm{1}(m(t_{l,i})), \quad \forall l,m=\{1,2,...,12\},
\end{equation} where we define $m(t_{l,i})=1$ if there was an alarm sounded in the interval $(t_{l,i}-1,t_{l,i}]$ caused by an event in ward $l$ and $m(t_{l,i})=0$ otherwise. We define the indicator function $\mathbbm{1}(x)=1$ if $x=1$ and $\mathbbm{1}(x)=0$ otherwise. In this setup, $K_{lm}$ controls the extent to which events cluster together caused by an event which did not sound an alarm, and $K_{lm}+\alpha_{l,m}$ controls the extent to which events cluster together caused by an event which caused an event to be sounded. We can thus interpret $\alpha_{l,m}$ as the extra excitation caused by an alarm.

\subsection{Seasonality}\label{Model: Seasonality}

From our preliminary analysis in Section \ref{Data}, we found that there is evidence of time-of-day seasonality in the data. To include this seasonal component in our model, we allow the background rate to vary deterministically over each day. In particular, we define a function $s:\{1,2,...,24\} \to (0,1)^{24}$ that gives the proportion of events that have occurred in that hour averaged over the training period and over each ward. Then define the background rate in dimension $m$ to be 
\begin{equation}\label{background in ward m definition}
    \mu^{(m)}(t) = \mu^{(m)} \cdot s(h(t)),
\end{equation}
where $h(t)$ gives the hour of day of the $t^{th}$ interval and $\mu^{(m)}>0$ is constant to be estimated, i.e., the ward-specific constant background scaling constants. We treat $s(h(t))$ as fixed then estimate $\mu^{(m)}$, $m=1,\ldots,M$ within the maximum likelihood estimation. We make the assumption that the time-of-day seasonality between wards are the same. This particular parameterization means that we can include time-varying background rates while only using twelve parameters. Having a small number of parameters helps to avoid overfitting. We tried several alternative approaches to modelling seasonality of the background rate, but this approach had the best predictive performance.

\subsection{Regularization}\label{regularization}

We now summarize our model up until here. To do so we introduce a piece of notation for simplicity of presentation. Namely, denote the set of arrival times in dimension $m$ such that the events caused an alarm to be sounded by $A = \{ A^{(1)},\ldots,A^{(M)} \}$, where $A^{(m)}=\{a_{m,1},\ldots,a_{m,N_{a,m}} \}$ for each $m =1,\ldots,M$, and denote $N_{a,m}$ the number of events in dimension $m$ that caused an alarm to be sounded. Then the conditional intensity function of our model is defined by  

\begin{equation}\label{Alarm lambda}
    \lambda^{(m)}(t) = \mu^{(m)} \cdot s(h(t)) + \sum^M_{l=1} \sum_{i:t_{l,i}\leq t-1} Y^{(l)}_{t_{l,i}} \Big( K_{l,m} + \alpha_{l,m} \mathbbm{1}(t_{l,i} \in A^{(l)}) \Big)  \beta(1-\beta)^{t-t_{l,i}-1}, 
\end{equation} where $h(t)$ gives the hour of the day corresponding to the $t^{th}$ interval, $s(\cdot)$ gives the proportion of events that have occurred in that hour averaged over each ward, $\beta \in (0,1)$, $K_{l,m}>0$ and $\alpha_{l,m}>0$ $\forall l,m\in \{q,\ldots,M\}$. 

Denote the collection of parameters in our model by $\theta= \{\mu^{(1)},\ldots,\mu^{(M)}, K_{1,1},\ldots,K_{M,M}, \beta,  \alpha_{1,1},\ldots,\alpha_{M,M} \}$. That is, we have 301 parameters to train. To avoid over-fitting, it is important to regularize our model. To do so, we use a (ridge) penalty term in the likelihood function. Namely, we define our regularized log-likelihood function $\log L_{R}(\cdot)$ by 

\begin{equation}\label{regularized log-likelihood}
    \log L_{R}(\theta) = \log L(\theta) + \lambda_h P(K,\alpha),
\end{equation}

where $\log L(\cdot)$ gives the non-regularized log-likelihood function of the model described by Equation \eqref{Alarm lambda} and $P(k,\alpha)=\sum_{l \neq m}K_{l,m}^2+ \sum_{l,m=1}^M\alpha_{l,m}^2$. That is, the penalisation term is given by the sum of the squared off-diagonal terms in the $K$-matrix multiplied by the hyperparameter $\lambda_h>0$. The motivation behind this choice of penalization is to embed our prior belief that the act of witnessing a violent event first hand should have a greater effect than learning about the event via word of mouth. This regularization method shrinks the off-diagonal elements $K$ and the matrix $\alpha$.  Note that we do not enforce any particular structure in the matrix $\alpha$, that is the extra excitation caused by the alarm, rather we regularize every part equally. We find the optimal $\lambda_h$ by cross-validating against a separate validation set $\textbf{T}_{val}$.

\section{Estimation}\label{Estimation}

In this section we work with the model as defined by Equation \eqref{Alarm lambda}. Under the multivariate discrete marked Hawkes process, $Y^{(m)}_t$ is Poisson distributed, as described in Section \ref{Multivariate discrete Hawkes}, That is, 

\begin{equation}\label{Probability under discrete Hawkes}
    P\Big(Y^{(m)}_t = y \Big| \lambda^{(m)}(t)\Big)=\frac{\Big(\lambda^{(m)}(t)\Big)^y e^{-\lambda^{(m)}(t)}}{y!}.
\end{equation} Let the time-frame of observations be denoted by $[0,T]$ and that there are $\delta$ discrete time-stamps over each time period. From the above, we obtain the following log-likelihood function, 

\begin{equation}\label{log-likelihood 1}
    \log L(\theta|\tau) = \sum^{T\delta}_{t=1} \sum^M_{m=1}\Big( Y^{(m)}_t \log(\lambda^{(m)}(t))-\lambda^{(m)}(t) \Big),
\end{equation} where $\theta$ is the set of parameters in the intensity function to be estimated and $\tau = \{ \tau^{(1)},\ldots,\tau^{(M)} \}$ where $\tau^{(m)}=\{t_{m,1},\ldots,t_{m,N_m} \}$ is the set of arrival times in dimension $m$. Hence $\tau$ is the set of arrivals times over the entire process. The problem with directly using this form of the likelihood is that if we observe the data over a large time-frame, $T>>0$, and/or a over large number of discrete steps, $\delta>>0$, then the sum becomes very expensive. Summing the intensity function over $T\delta$ is the problem since these terms have complexity $\mathcal{O}((T\delta)^2)$. 
However, since the excitation function in Equation \eqref{Alarm lambda} is decomposed in the form as in Equation \eqref{excitation kernel form}, i.e.,

\begin{equation*}\label{multivariate intensity geom - estimation first}
    \Phi_{l,m}(t-t_i,m(t_{l,i}))=f_{l,m}(m(t_{l,i}))g_{l,m}(t-t_{l,i}),
\end{equation*} for each $l,m=\{1,2,...,M\}$, we are able to reduce the computational complexity of the log-likelihood function in Equation \eqref{log-likelihood 1} to order $\mathcal{O}(N_{\tau}^2)$, where $N_{\tau}$ gives the total number of events over the entire process. The proof of this following proposition is analogous to the continuous-time setting \cite{Ozaki}. Let $N_{\delta}=T\delta$.

\begin{prop}\label{First likelihood prop}
For the intensity function as defined by Equation \eqref{multivariate intensity geom - estimation first}, we can rewrite the log-likelihood function in Equation \eqref{log-likelihood 1} as follows

\begin{equation}\label{log-likelihood 2}
    \log L(\theta|\tau) = \sum^M_{m=1} \sum_{t\in \tau^{(m)}}  Y^{(m)}_t\log(\lambda^{(m)}(t))
    - \sum^M_{m=1}\sum^M_{l=1} \sum_{t\in \tau^{(l)}} Y_t^{(l)} f_{l,m}(m(t)) G_{l,m}(N-t) 
    - \sum_{m=1}^M \sum_{t=1}^{N_{\delta}} \mu^{(m)}(t)
\end{equation}

where $G_{l,m}(\cdot)$ is the cumulative mass function of the probability density function $g_{l,m}(\cdot)$. 

\end{prop}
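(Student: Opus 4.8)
The plan is to start from the raw log-likelihood in Equation \eqref{log-likelihood 1} and split the double sum into its two natural pieces, namely the ``$Y\log\lambda$'' term $\sum_{t=1}^{N_\delta}\sum_{m=1}^M Y_t^{(m)}\log(\lambda^{(m)}(t))$ and the ``compensator'' term $-\sum_{t=1}^{N_\delta}\sum_{m=1}^M \lambda^{(m)}(t)$, then simplify each separately. The first piece requires almost no work: since $Y_t^{(m)}$ counts the events occurring in the interval $(t-1,t]$ in dimension $m$, it vanishes whenever $t\notin\tau^{(m)}$, so the outer sum over all $N_\delta$ time-stamps collapses to a sum over the event times $t\in\tau^{(m)}$. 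This reproduces the first term of Equation \eqref{log-likelihood 2} verbatim.

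All the content lies in the compensator term. First I would substitute the intensity decomposition obtained by combining Equations \eqref{Multivariate discrete Hawkes eqn} and \eqref{excitation kernel form}, writing $\lambda^{(m)}(t)=\mu^{(m)}(t)+\sum_{l=1}^M\sum_{i:t_{l,i}\le t-1} Y^{(l)}_{t_{l,i}} f_{l,m}(m(t_{l,i}))\,g_{l,m}(t-t_{l,i})$. The background contribution $\sum_{t=1}^{N_\delta}\mu^{(m)}(t)$ separates off immediately and, summed over $m$, yields the final term of Equation \eqref{log-likelihood 2}. The remaining excitation contribution is
$$\sum_{m=1}^M\sum_{t=1}^{N_\delta}\sum_{l=1}^M \sum_{i:t_{l,i}\le t-1} Y^{(l)}_{t_{l,i}} f_{l,m}(m(t_{l,i}))\,g_{l,m}(t-t_{l,i}).$$

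The central step I would perform is a Fubini-style interchange that moves the sum over the arrival indices $(l,i)$ to the outside. Fixing a given arrival $t_{l,i}$, the constraint $t_{l,i}\le t-1$ is equivalent to $t\ge t_{l,i}+1$, so the variable $t$ ranges over $t_{l,i}+1,\ldots,N_\delta$. Re-indexing with $u=t-t_{l,i}$ turns the inner sum into $\sum_{u=1}^{N_\delta-t_{l,i}} g_{l,m}(u)$, which is by definition the cumulative mass function $G_{l,m}(N_\delta-t_{l,i})$. Writing the surviving sum over the arrival index $i$ as a sum over $t\in\tau^{(l)}$ then produces $\sum_{m=1}^M\sum_{l=1}^M\sum_{t\in\tau^{(l)}} Y_t^{(l)} f_{l,m}(m(t))\,G_{l,m}(N_\delta-t)$, which is the middle term of Equation \eqref{log-likelihood 2}. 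Collecting the three pieces would complete the argument.

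I expect the only real obstacle to be the bookkeeping in this interchange: one must track the summation limits carefully so that the telescoped sum is recognized as $G_{l,m}$ evaluated at $N_\delta-t$ and not at a shifted argument, and one must move freely between the ``sum over all time-stamps weighted by $Y_t^{(l)}$'' and the ``sum over event times $t\in\tau^{(l)}$'' representations. This is precisely the discrete counterpart of the Ozaki recursion \citep{Ozaki}, and it is what collapses the naive $\mathcal{O}(N_\delta^2)$ cost into $\mathcal{O}(N_\tau^2)$; beyond this step everything is routine.
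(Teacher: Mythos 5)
Your proposal is correct and follows essentially the same route as the paper's proof: substitute the decomposed intensity into the compensator, interchange the order of summation over time-stamps and arrival indices, and recognize the resulting inner sum $\sum_{u=1}^{N_\delta-t}g_{l,m}(u)$ as $G_{l,m}(N_\delta-t)$ before rewriting the arrival sum over $\tau^{(l)}$. The only cosmetic difference is that you spell out why the $Y\log\lambda$ term collapses to event times, which the paper leaves implicit.
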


\begin{proof}
    See Appendix \ref{First likelihood prop - proof}.
\end{proof}

We can apply Proposition \ref{First likelihood prop} using the form of the intensity in Equation \eqref{Alarm lambda} to obtain the following corollary.

\begin{corollary}\label{corollary}
For the intensity function as defined by Equation \eqref{Alarm lambda}, we can rewrite the log-likelihood function in Equation \eqref{log-likelihood 1} as follows

\begin{multline}
    \log L(\theta|\tau, A) = \sum^M_{m=1}\sum_{t\in \tau^{(m)}} Y_t^{(m)} \log(\lambda^{(m)}(t)) - \sum_{m=1}^M \sum_{l=1}^M K_{l,m} \sum_{t\in \tau^{(l)}} Y_t^{(l)} (1-(1-\beta)^{N_{\delta}-t}) \\
    -\sum_{m=1}^M \sum_{l=1}^M \alpha_{l,m} \sum_{a\in A^{(l)}} Y_a^{(l)} (1-(1-\beta)^{N_{\delta}-a}) - \sum^M_{m=1} \mu^{(m)} \sum^{N_{\delta}}_{t=1} s(h(t)).
\end{multline}

\end{corollary}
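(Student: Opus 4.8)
The plan is to apply Proposition \ref{First likelihood prop} directly: the model of Equation \eqref{Alarm lambda} is nothing more than the general decomposed intensity of Equation \eqref{multivariate intensity geom - estimation first} with specific choices of excitation kernel, mark function and background rate, so the corollary should follow by substituting these choices into the three terms of that proposition and simplifying.

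First I would pin down the three ingredients. For this model the decay is geometric, $g_{l,m}(t)=\beta(1-\beta)^{t-1}$, so its cumulative mass function is the partial sum of a finite geometric series,
\begin{equation*}
G_{l,m}(k)=\sum_{j=1}^{k}\beta(1-\beta)^{j-1}=1-(1-\beta)^{k}.
\end{equation*}
In particular $G_{l,m}(N_{\delta}-t)=1-(1-\beta)^{N_{\delta}-t}$, and this is independent of $l,m$ because $\beta$ is held fixed across all wards. The mark function is $f_{l,m}(m(t))=K_{l,m}+\alpha_{l,m}\mathbbm{1}(t\in A^{(l)})$ by Equation \eqref{K form}, and the background rate is $\mu^{(m)}(t)=\mu^{(m)}s(h(t))$ by Equation \eqref{background in ward m definition}.

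Next I would substitute these into the statement of Proposition \ref{First likelihood prop}. The leading log-intensity term carries over unchanged. For the middle term I would insert $f_{l,m}$ and use linearity of the inner sum to split it into a $K_{l,m}$-weighted piece and an $\alpha_{l,m}$-weighted piece. The key step is recognising that the indicator $\mathbbm{1}(t\in A^{(l)})$ collapses the inner sum over all arrival times $t\in\tau^{(l)}$ down to a sum over the alarm times $a\in A^{(l)}$ only; re-indexing then yields exactly the two double sums appearing in the corollary, with weights $Y_t^{(l)}$ and $Y_a^{(l)}$ and the factor $1-(1-\beta)^{N_{\delta}-t}$ in each case. For the final term I would substitute the background rate and factor out the ward-specific constant $\mu^{(m)}$, producing $\sum_{m}\mu^{(m)}\sum_{t=1}^{N_{\delta}}s(h(t))$.

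I do not anticipate a genuine obstacle, since the result is a direct specialisation rather than a new argument; the content lives entirely in Proposition \ref{First likelihood prop}. The only points requiring a little care are verifying the geometric CDF and its off-by-one exponent, and confirming that when the indicator restricts $\tau^{(l)}$ to $A^{(l)}$ the correct weighting $Y_a^{(l)}$ and the correct upper exponent $N_{\delta}-a$ are retained under the re-indexing.
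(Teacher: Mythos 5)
Your proposal is correct and follows essentially the same route as the paper's own proof: apply Proposition \ref{First likelihood prop} with $f_{l,m}(m(t))=K_{l,m}+\alpha_{l,m}\mathbbm{1}(t\in A^{(l)})$ and $G_{l,m}(k)=1-(1-\beta)^{k}$, expand the bracket, and let the indicator collapse the inner sum to the alarm times $A^{(l)}$. The only difference is presentational: you spell out the geometric CDF computation and the background-rate substitution, which the paper leaves implicit.
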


\begin{proof}
    See Appendix \ref{corollary - proof}.
\end{proof}

We have achieved a reduction of the complexity of the intensity function calculation by the fact that $N_{\tau} \leq N_{\delta}$. In fact, in our application, $N_{\tau}$ is much smaller than $N_{\delta}$. Thus the reduction in complexity is quite dramatic. Still, however, when the number of events is high, the repeated intensity function calculation will become expensive. We develop a faster approach in the next section which takes only linear  time complexity to calculate the intensity function. This is achieved by developing a recursive formulation for the intensity function by maintaining two additional data structures which take a constant space per event.

\begin{prop}\label{Recursion 1} For $j\in \{1,\ldots,N_m\}$ and $m \in \{1,\ldots,M\}$, we can write the conditional intensity function in Equation \eqref{Alarm lambda} in terms of recursive relations $R$ and $R_a$ as follows, 

\begin{equation}
    \lambda^{(m)}(t_{m,j})= \mu^{(m)}\cdot s(h(t_{m,j})) + \sum^M_{l=1}{K_{l,m}R(j,l)} + \sum^M_{l=1}{\alpha_{l,m}R_a(j,l)}
\end{equation}

where $R$ and $R_a$ are defined as follows: $\forall l \in \{1,...,M\}, j=1,...,N_m-1$, define

\begin{equation}\label{R}
    R(j+1,l) := (1-\beta)^{t_{l,j+1}-t_{l,j}}R(j,l)+Y_{t_{j,l}}^{(l)}\beta(1-\beta)^{t_{l,j+1}-t_{l,j}-1},
\end{equation}

and 

\begin{equation}\label{R_a}
    R_a(j+1,l) := (1-\beta)^{t_{l,j+1}-t_{l,j}}R_a(j,l)+Y_{t_{l,j}}^{(l)} \mathbbm{1}(t_{l,j} \in A^{(l)})\beta(1-\beta)^{t_{l,j+1}-t_{l,j}-1},
\end{equation}

where $R(1,l):=R_a(1,l):=0$.

\end{prop}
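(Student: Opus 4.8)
The plan is to identify the two recursively defined sequences $R(j,l)$ and $R_a(j,l)$ with the two inner excitation sums that appear in the intensity \eqref{Alarm lambda}, after which the stated decomposition of $\lambda^{(m)}(t_{m,j})$ follows by nothing more than collecting the background term and summing over the source dimension $l$. Writing $t$ for the current evaluation time, the quantity that $R$ should equal is the accumulated geometric excitation mass from dimension-$l$ events, $\sum_{i: t_{l,i} \le t-1} Y^{(l)}_{t_{l,i}}\beta(1-\beta)^{t-t_{l,i}-1}$, and $R_a$ the same sum restricted to alarm-triggering events by inserting the fixed weight $\mathbbm{1}(t_{l,i}\in A^{(l)})$. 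Granting these two identities, one multiplies the first by $K_{l,m}$ and the second by $\alpha_{l,m}$, sums over $l$, adds $\mu^{(m)}s(h(t_{m,j}))$, and reproduces \eqref{Alarm lambda} exactly; so the substance of the proof is establishing that the recursions \eqref{R} and \eqref{R_a} actually compute these accumulators.

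Both identities I would prove by induction on $j$. The base case is immediate, since before the first event there is nothing to accumulate and the empty sums agree with the stipulated $R(1,l)=R_a(1,l)=0$. The inductive step rests entirely on the multiplicative decay of the geometric kernel, the identity $(1-\beta)^{t_{l,j+1}-t_{l,i}-1} = (1-\beta)^{t_{l,j+1}-t_{l,j}}(1-\beta)^{t_{l,j}-t_{l,i}-1}$. Splitting the target sum at step $j+1$ into the contribution of the newly arrived event and the contributions of all earlier events, factoring the common gap term $(1-\beta)^{t_{l,j+1}-t_{l,j}}$ out of the latter, and recognising what remains as $R(j,l)$, reproduces \eqref{R} exactly; the computation for $R_a$ is word-for-word the same, the alarm indicator being a constant weight attached to each summand that passes unchanged through the factoring and so yields \eqref{R_a}.

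The structural reason this succeeds — and the point I would emphasise — is the separability of the excitation kernel assumed in \eqref{excitation kernel form}: because $\Phi_{l,m}$ factorises as $f_{l,m}(\cdot)g_{l,m}(\cdot)$ with the temporal part $g_{l,m}(t)=\beta(1-\beta)^{t-1}$ shared across marks, the mark-dependent coefficient $K_{l,m}+\alpha_{l,m}\mathbbm{1}(\cdot)$ pulls outside the time summation while only the geometric factor telescopes. This is precisely why two accumulators suffice: $R$ carries the total decayed mass of past events and $R_a$ the decayed mass of the alarm subset, and their linear combination $\sum_l\big(K_{l,m}R(j,l)+\alpha_{l,m}R_a(j,l)\big)$ rebuilds the full excitation.

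I expect the main obstacle to be bookkeeping rather than anything conceptually deep. One must keep the accumulators synchronised with the correct event grid as $j$ advances — in particular being careful that the decay factor extracted between consecutive updates is $(1-\beta)^{t_{l,j+1}-t_{l,j}}$ while the freshly added term retains the $\beta(1-\beta)^{t_{l,j+1}-t_{l,j}-1}$ offset — since an off-by-one in the exponent would silently destroy the telescoping that the whole argument relies on. Verifying these offset conventions, and checking that the $m=l$ self-excitation and $m\neq l$ cross-excitation terms are both captured by the same update, is the only place genuine care is required; the remainder is routine geometric-series manipulation.
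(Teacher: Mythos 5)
Your proposal is correct and follows essentially the same route as the paper's own proof: the paper likewise establishes the one-step telescoping identity $(1-\beta)^{t_{l,j+1}-t_{l,i}-1}=(1-\beta)^{t_{l,j+1}-t_{l,j}}(1-\beta)^{t_{l,j}-t_{l,i}-1}$, splits off the newly arrived event, and identifies $R(j,l)$ and $R_a(j,l)$ with the accumulated excitation sums (with the alarm indicator passing through unchanged); your version merely packages this as an explicit induction with a stated base case. The bookkeeping concern you flag about keeping the accumulators synchronised with the dimension-$l$ event grid while evaluating $\lambda^{(m)}$ at dimension-$m$ event times is a real subtlety that the paper's proof also leaves implicit, so you are not introducing any gap the paper does not already have.
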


\begin{proof}
    See Appendix \ref{Recursion 1 - proof}.
\end{proof}

Proposition \ref{Recursion 1} allows us to write a recursive algorithm to compute the intensity function over each event time. At every step of the sum over $\tau^{(l)}$, we have that the complexity of $R(j,l)$ given the value $R(j-1,l)$ is $\mathcal{O}(1)$. In contrast, without the recursive relation, the same sum would have complexity $\mathcal{O}(N_l^2)$.

We approach the estimation of our process in a frequentist manner. That is, we use a gradient ascent algorithm to traverse the parameter space to obtain the maximum likelihood estimates. Having a closed form gradient will reduce the issues with numerical inaccuracy compared to a finite difference approximation of the algorithm. However, the closed form solution of the gradient shares the same burden of the likelihood function. That is, the computational complexity to compute an instance of the gradient is $\mathcal{O}(N_{\tau}^2)$, as shown in Appendix \ref{partials - alarm}. However, in a similar manner as with the log-likelihood function as above, the following proposition shows that we can write the gradient in a recursive manner, which allows us to reduce the complexity of this calculation to $\mathcal{O}(N_{\tau})$.

\begin{prop}\label{corr 1} 

We can write the gradient of the log-likelihood function as in Corollary \ref{corollary}, in terms of recursive relations as follows; for $j\in \{1,\ldots,N_l\}$ and $l \in \{1,\ldots,M\}$,

\begin{equation}\label{mu gradient - recursive}
    \frac{\partial}{\partial \mu^{(p)}} \log L(\theta|\tau,A) = \sum_{j:t_{p,j} \in \tau^{(p)}} \frac{Y_{t_{p,j}}^{(p)}s(h(t_{p,j}))}{\mu_0^{(p)}\cdot s(h(t_{p,j})) + \sum_{l=1}^M k_{l,p} R(j,l) + \sum_{l=1}^M \alpha_{l,p} R_a(j,l)} - \sum^{N_{\delta}}_{t=1} s(h(t))
\end{equation}

\begin{multline}\label{k gradient - recursive}
    \frac{\partial}{\partial K_{pq}} \log L(\theta|\tau,A) = \sum_{j:t_{q,j} \in \tau^{(q)}} \frac{Y_{t_{q,j}}^{(q)} R(j,p)}{\mu_0^{(q)}\cdot s(h(t_{q,j})) + \sum_{l=1}^M k_{l,q} R(j,l) + \sum_{l=1}^M \alpha_{l,q} R_a(j,l)} \\
    -
    \sum_{t\in \tau^{(p)}}Y_t^{(p)}(1-(1-\beta)^{N_{\delta}-t})
\end{multline}

\begin{multline}\label{alpha gradient - recursive}
    \frac{\partial}{\partial \alpha_{p,q}} \log L(\theta|\tau,A) = \sum_{j:t_{q,j} \in \tau^{(q)}} \frac{Y_{t_{q,j}}^{(q)} R_a(j,p)}{\mu_0^{(q)}\cdot s(h(t_{q,j})) + \sum_{l=1}^M k_{l,q} R(j,l) + \sum_{l=1}^M \alpha_{l,q} R_a(j,l)} \\
    -
    \sum_{a\in A^{(p)}}Y_a^{(p)}(1-(1-\beta)^{N_{\delta}-a})
\end{multline}

\begin{multline}\label{beta gradient - Recursive}
    \frac{\partial}{\partial \beta} \log L(\theta|\tau,A)
    =  \sum_{m=1}^{M} \sum_{j:t_{m,j} \in \tau^{(m)}} \frac{Y_{t_{m,j}}^{(m)}}{\mu^{(m)}\cdot s(h(t_{m,j}))  + \sum_{l=1}^M K_{l,m} R(j,l) + \sum_{l=1}^M \alpha_{l,m} R_a(j,l)} \times \\ 
    \Bigg( \sum_{l=1}^{M} K_{l,m} \Big( R^{1}(j,l)-R^{2}(j,l)\Big) + 
    \sum_{l=1}^{M} \alpha_{l,m} \Big( R_a^{1}(j,l)-R_a^{2}(j,l)\Big)\Bigg)  \\
    - \sum_{m=1}^{M} \sum_{l=1}^{M}  \sum_{t\in \tau^{(l)}} K_{l,m} Y^{(l)}_t (N_{\delta}-t)(1-\beta)^{N_{\delta}-t-1} 
    -  \sum_{m=1}^{M} \sum_{l=1}^{M} \sum_{a\in A^{(l)}} \alpha_{l,m} Y^{(l)}_a (N_{\delta}-a)(1-\beta)^{N_{\delta}-a-1}.
\end{multline} $R$ and $R_a$ are defined as in Proposition \ref{Recursion 1} and $R^1$, $R^1_a$, $R^2$ and $R^2_a$ are as follows: Define $R^{1}(1,l):=R_a^{1}(1,l):=R^{2}(1,l):=R_a^{2}(1,l):=0$ $\forall l = 1,...,M$ and we define recursively, for $j\in \{1,\ldots,N_l-1\}$ and $l \in \{1,\ldots,M\}$,

\begin{align}
    R^{1}(j+1,l) &:= (1-\beta)^{t_{l,j+1}-t_{l,j}}R^{1}(j,l)+Y_{t_{l,j}}^{(l)}(1-\beta)^{t_{l,j+1}-t_{l,j}-2} \\ 
    R_a^{1}(j+1,l) &:= (1-\beta)^{t_{l,j+1}-t_{l,j}}R_a^{1}(j,l)+Y_{t_{l,j}}^{(l)}\mathbbm{1}(t_{l,j}\in A^{(l)})(1-\beta)^{t_{l,j+1}-t_{l,j}-2} \\
    R^{2}(j+1,l) &:= \beta(t_{l,j+1}-t_{l,j})R^{1}(j+1,l)+(1-\beta)^{t_{l,j+1}-t_{l,j}}R^2(j,l) \\
    R_a^{2}(j+1,l) &:= \beta(t_{l,j+1}-t_{l,j})R_a^{1}(j+1,l)+(1-\beta)^{t_{l,j+1}-t_{l,j}}R_a^2(j,l)
\end{align}
\end{prop}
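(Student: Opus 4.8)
The plan is to differentiate the closed-form log-likelihood of Corollary \ref{corollary} one parameter family at a time, each time substituting the recursive representation $\lambda^{(m)}(t_{m,j}) = \mu^{(m)}s(h(t_{m,j})) + \sum_l K_{l,m}R(j,l) + \sum_l \alpha_{l,m}R_a(j,l)$ supplied by Proposition \ref{Recursion 1}. Writing the Corollary expression as a log-intensity sum minus three explicit compensator sums $B_\mu$, $B_K$, $B_\alpha$, the key structural fact that makes three of the four gradients routine is that $R(j,l)$ and $R_a(j,l)$ depend only on $\beta$ and the data, and not on $\mu$, $K$ or $\alpha$. Consequently, for those parameters the intensity is linear and the chain rule produces a single clean term.

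For $\mu^{(p)}$, only the $m=p$ summand of the log-intensity term and $B_\mu$ depend on it; since $\partial\lambda^{(p)}(t_{p,j})/\partial\mu^{(p)} = s(h(t_{p,j}))$ and $\partial B_\mu/\partial\mu^{(p)} = \sum_{t=1}^{N_{\delta}} s(h(t))$, the chain rule gives \eqref{mu gradient - recursive} once the Proposition \ref{Recursion 1} denominator is inserted. For $K_{pq}$ the parameter enters $\lambda^{(q)}$ with coefficient $R(j,p)$ and enters $B_K$ only through the single term $K_{pq}\sum_{t\in\tau^{(p)}}Y_t^{(p)}(1-(1-\beta)^{N_{\delta}-t})$, which yields \eqref{k gradient - recursive}; the derivation of \eqref{alpha gradient - recursive} is identical with $R_a$ and $A^{(p)}$ in place of $R$ and $\tau^{(p)}$. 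These three cases are essentially bookkeeping because the parameters occur linearly in $\lambda^{(m)}$.

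The genuine work is the $\beta$-gradient \eqref{beta gradient - Recursive}, since $\beta$ enters \emph{both} through $R(j,l)$ and $R_a(j,l)$ inside every $\lambda^{(m)}(t_{m,j})$ and through the factors $(1-(1-\beta)^{N_{\delta}-t})$ in the compensators. The explicit terms differentiate immediately, using $\partial_\beta(1-(1-\beta)^{N_{\delta}-t}) = -(N_{\delta}-t)(1-\beta)^{N_{\delta}-t-1}$, and account for the final two sums of \eqref{beta gradient - Recursive}. The heart of the matter is to establish $\partial R(j,l)/\partial\beta = R^{1}(j,l)-R^{2}(j,l)$ and $\partial R_a(j,l)/\partial\beta = R_a^{1}(j,l)-R_a^{2}(j,l)$; granting these, the chain rule applied to $\partial_\beta\log\lambda^{(m)}(t_{m,j}) = \lambda^{(m)}(t_{m,j})^{-1}\bigl(\sum_l K_{l,m}\,\partial_\beta R(j,l) + \sum_l \alpha_{l,m}\,\partial_\beta R_a(j,l)\bigr)$ produces the first block of \eqref{beta gradient - Recursive} exactly.

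To prove those two identities I would first unroll the recursion \eqref{R} into the closed form $R(j,l) = \sum_{i<j} Y_{t_{l,i}}^{(l)}\beta(1-\beta)^{t_{l,j}-t_{l,i}-1}$, and likewise verify, by a one-step induction against their defining recursions, that $R^{1}(j,l) = \sum_{i<j} Y_{t_{l,i}}^{(l)}(1-\beta)^{t_{l,j}-t_{l,i}-2}$ and $R^{2}(j,l) = \sum_{i<j} Y_{t_{l,i}}^{(l)}\beta(t_{l,j}-t_{l,i})(1-\beta)^{t_{l,j}-t_{l,i}-2}$. Differentiating the closed form of $R$ termwise via $\partial_\beta[\beta(1-\beta)^{d-1}] = (1-\beta)^{d-2}(1-\beta d)$ then matches $R^{1}-R^{2}$ on the nose. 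Equivalently one can argue purely recursively by induction on $j$, but the inductive step closes only after invoking the auxiliary relation $R(j,l) = \beta(1-\beta)R^{1}(j,l)$; I expect precisely this coupling — having to carry $R$, $R^{1}$ and $R^{2}$ together rather than $R$ alone — to be the main obstacle, and the closed-form route is the cleaner way past it. The alarm-weighted identity is verbatim the same argument with the indicator $\mathbbm{1}(t_{l,i}\in A^{(l)})$ carried through every sum.
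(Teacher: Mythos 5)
Your proposal is correct and follows essentially the same route as the paper: the first three gradients are obtained by linearity of $\lambda^{(m)}$ in $\mu$, $K$, $\alpha$ together with the recursive form of the intensity from Proposition \ref{Recursion 1}, and the $\beta$-gradient rests on the termwise identity $\partial_\beta\bigl[\beta(1-\beta)^{d-1}\bigr]=(1-\beta)^{d-2}(1-\beta d)$, which identifies $\partial_\beta R$ with $R^1-R^2$ exactly as the paper does via its closed-form definitions of $R^1$ and $R^2$. Your verification that the stated recursions reproduce those closed forms (splitting $t_{l,j+1}-t_{l,i}$ as $(t_{l,j+1}-t_{l,j})+(t_{l,j}-t_{l,i})$) is the same computation carried out in the paper's appendix for the unmarked case.
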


\begin{proof}
    See Appendix \ref{corr 1 - proof}.
\end{proof}

\section{Forecasting}

In the frequentist setting, forecasting under a Hawkes process consists of conditioning the intensity function on the known past events and then simulating many paths of the process, giving a predictive distribution for the number of events in the given prediction window. Simulation algorithms for continuous-time Hawkes process rely on a technique known as thinning \citep{article}. Despite the derivation being less complicated compared to the simulation method for the continuous-time variant, the methods for discrete Hawkes processes \citep{Poisson_Simulation} are burdened with having to run computations over each discrete time-step. We will tackle this issue in this section in order to develop algorithms for the purpose in forecasting, the results for which are presented in the next section. We present a naive simulation algorithm \ref{alg:Naive - Alarms} and a novel simulation algorithm \ref{alg:recursive Alarm}, for the model as described in Section \ref{Model}, which reduces the complexity of Algorithm \ref{alg:Naive - Alarms} by reducing the complexity of the intensity function calculation which has to be computed at each step. 

To simulate from our model defined by the intensity function as in Equation \eqref{Alarm lambda}, we must simulate alarm indicators as well as event times. Hence, we need a model for the alarms. We model the alarm indicators, given at least one event has occurred, by a Bernoulli distribution. That is we assume that there is a constant probability that each event will cause an alarm to be sounded, denoted by $p:=p(i \in A^{(m)})$, for each $m=1,\ldots,M$. We can estimate $p$ by calculating the maximum likelihood estimator for the parameter, i.e., the sample mean. For our application the alarms are used as a covariate where we do not explicitly assume a distribution for these occurrences. So in practice, one could simulate the alarm process using any distribution.

\begin{algorithm}[H]
	\caption{Naive algorithm to simulate from the multivariate discrete Hawkes model with intensity function as in Equation \eqref{Alarm lambda}.}
	\label{alg:Naive - Alarms}
	\begin{algorithmic}[1]
	\State \textbf{Require} $N_{sim}, p,\mu^{(1)},\ldots,\mu^{(M)}, K_{1,1},\ldots,K_{M,M}, \beta,  \alpha_{1,1},\ldots,\alpha_{M,M}$
        \For {$m=1,\ldots,M$}
        \State Generate $u \sim \Uniform(0,1)$
        \If{$u<p$} 
				\State Set $1\in A^{(m)}$.
		\EndIf
        \EndFor
	\For {$m=1,\ldots,M$}
                \State Generate $Y^{(m)}_1 \sim \Poisson(\mu^{(m)} \cdot s(h(1)))$
        \EndFor
		\For {$t=2,\ldots,N_{sim}$}
			\For {$l=1,2,\ldots,M$}
				\State $\lambda^{(m)}(t) \gets \mu^{(m)} \cdot s(h(t)) + \sum^M_{l=1} \sum_{i:t_{l,i}\leq t-1} Y^{(l)}_{t_{l,i}} \Big( K_{l,m} + \alpha_{l,m} \mathbbm{1}(m(t_{l,i})) \Big)  \beta(1-\beta)^{t-t_{l,i}-1} $
                \State Generate $Y^{(m)}_t \sim \Poisson(\lambda^{(m)}(t))$
                \State Generate $u \sim \Uniform(0,1)$
                \If{$Y^{(m)}_t>0$ and $u<p$} 
				\State Set $t\in A^{(m)}$
				\EndIf
			\EndFor
		\EndFor
	\end{algorithmic} 
\end{algorithm}

We show next that the intensity function can be written recursively which will in-turn allow us to design a simulation algorithm which has computational complexity $\mathcal{O}(M \cdot N_{\delta})$, compared to $\mathcal{O}(M \cdot (N_{\delta})^2)$ for the naive Algorithm \ref{alg:Naive - Alarms}.

\begin{lem}\label{R3 - with alarm}
We can write the intensity function as defined in Equation \eqref{Alarm lambda}, for each discrete time step $t=1,\ldots,N_{\delta}$ as follows 

\begin{equation}
    \lambda^{(m)}_t = \mu^{(m)}(t)+\sum^M_{l=1}K_{lm}R^3(t,l)+\sum^M_{l=1}\alpha_{l,m}R_a^3(t,l), 
\end{equation}

where $\forall l=\{1,...,M\}$ we define $R^3(1,l):=R_a^3(1,l):=0$, 

\begin{equation}
    R^3(t+1,l)=(1-\beta)R^3(t,l)+Y^{(l)}_t\beta, \; \quad \text{for} \hspace{0.1cm} t=1,\ldots,N_{\delta}-1,
\end{equation}

and 

\begin{equation}
    R_a^3(t+1,l)=(1-\beta)R_a^3(t,l)+Y^{(l)}_t\mathbbm{1}(t\in A)\beta, \; \quad \text{for} \hspace{0.1cm} t=1,\ldots,N_{\delta}-1.
\end{equation}

\end{lem}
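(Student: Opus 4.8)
The plan is to prove the two recursions by induction on the time index $t$ and then substitute the resulting closed forms into Equation \eqref{Alarm lambda}. This mirrors the argument for Proposition \ref{Recursion 1}, but with one structural difference: here the recursion advances through \emph{every} discrete time step $t = 1, \ldots, N_\delta$ (as is needed during simulation, where the intensity must be evaluated at each step), rather than skipping from one arrival time $t_{l,j}$ to the next. Consequently the geometric factor in the update is simply $(1-\beta)$ rather than $(1-\beta)^{t_{l,j+1}-t_{l,j}}$.

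First I would establish the closed form $R^3(t,l) = \sum_{s=1}^{t-1} Y_s^{(l)} \beta (1-\beta)^{t-1-s}$ by induction. The base case $R^3(1,l)=0$ is the empty sum. For the inductive step I would assume the formula at $t$ and compute $(1-\beta)R^3(t,l) + Y_t^{(l)}\beta$, pulling the factor $(1-\beta)$ inside the sum to raise each exponent by one and absorbing the new $Y_t^{(l)}\beta$ term as the $s=t$ summand; this yields exactly $\sum_{s=1}^t Y_s^{(l)} \beta(1-\beta)^{t-s}$, which is $R^3(t+1,l)$. The identical argument, carrying the indicator $\mathbbm{1}(s \in A^{(l)})$ through every step, gives $R_a^3(t,l) = \sum_{s=1}^{t-1} Y_s^{(l)} \mathbbm{1}(s\in A^{(l)}) \beta(1-\beta)^{t-1-s}$.

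Next I would reconcile these closed forms with the summation appearing in Equation \eqref{Alarm lambda}. The key observation is that $Y_s^{(l)} = 0$ whenever $s$ is not an arrival time in dimension $l$, so the sum $\sum_{i : t_{l,i} \leq t-1} Y^{(l)}_{t_{l,i}}(\cdots)$ over arrival times coincides with the full sum $\sum_{s=1}^{t-1} Y_s^{(l)}(\cdots)$ over all time steps, while the exponent $t - t_{l,i} - 1$ matches $t - 1 - s$. Substituting, splitting the bracket $K_{l,m} + \alpha_{l,m}\mathbbm{1}(t_{l,i}\in A^{(l)})$ into its two additive pieces, and recognizing each piece as $R^3(t,l)$ and $R_a^3(t,l)$ respectively delivers the claimed identity $\lambda^{(m)}_t = \mu^{(m)}(t) + \sum_l K_{lm} R^3(t,l) + \sum_l \alpha_{l,m} R_a^3(t,l)$.

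I do not expect a genuine obstacle here; the content is a routine geometric-series induction. The only points demanding care are bookkeeping ones: aligning the index convention (events recorded in $(s-1,s]$ contributing $Y_s^{(l)}$), checking that the exponent shift in the inductive step is off by exactly one so that the freshly added $s=t$ term carries $(1-\beta)^0$, and confirming that summing over arrival times versus all discrete steps is interchangeable because the unobserved steps contribute zero counts.
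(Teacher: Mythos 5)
Your proposal is correct and is essentially the paper's own argument run in the opposite direction: the paper defines $R^3(t,l)$ directly as the partial sum $\sum_{i:t_{l,i}\leq t-1}Y^{(l)}_{t_{l,i}}\beta(1-\beta)^{t-t_{l,i}-1}$ and peels off the last term to obtain the recursion, whereas you start from the recursion and recover the same closed form by induction before substituting into Equation \eqref{Alarm lambda}. The observation that the sum over arrival times agrees with the sum over all time steps because $Y^{(l)}_s=0$ off the arrival set is the same bookkeeping step the paper uses implicitly when it adds the term $Y^{(l)}_t\beta$ at every $t$, so there is no substantive difference between the two proofs.
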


\begin{proof}
    See Appendix \ref{R3 - with alarm - proof}.
\end{proof}

Using Lemma \ref{R3 - with alarm}, we can now present a recursive algorithm to simulate from the multivariate discrete Hawkes model with intensity function as in Equation \eqref{Alarm lambda}. This algorithm is such that at each time step $t$, we must only compute $R^3(t,l)$, for $l = 1,\ldots,M$, conditional on the values $R^3(t-1,l)$, $l= 1,\ldots,M$. Hence, at each time-step, the computational complexity is of order $\mathcal{O}(M \cdot N_{\delta})$. In comparison, the Naive Algorithm \ref{alg:Naive - Alarms} has complexity $\mathcal{O}(M \cdot (N_{\delta})^2)$ due to the computation of $\lambda^{(m)}(t)$, for each $t=1,\ldots,N_{\delta}$.

\begin{algorithm}[H]
	\caption{A recursive algorithm to simulate from the multivariate discrete Hawkes model with intensity function as in Equation \eqref{Alarm lambda}.}
        \label{alg:recursive Alarm}
	\begin{algorithmic}[1]
	\State \textbf{Require} $N_{sim}, p,\mu^{(1)},\ldots,\mu^{(M)}, K_{1,1},\ldots,K_{M,M}, \beta,  \alpha_{1,1},\ldots,\alpha_{M,M}$
	\For {$m=1,\ldots,M$}
                \State Generate $Y^{(m)}_1 \sim \Poisson(\mu^{(m)} \cdot s(h(1)))$
        \EndFor
        \For {$m=1,\ldots,M$}
        \State Generate $u \sim \Uniform(0,1)$
        \If{$u<p$} 
				\State Set $1\in A^{(m)}$.
		\EndIf
        \EndFor
	\State Set $R^3(1,l) \gets 0$ and $R_a^3(1,l) \gets 0$.
		\For {$t=2,\ldots,N_{sim}$}
		\For {$m=1\ldots,M$}
			\For {$l=1,2,\ldots,M$}
			    \State $R^3(t,l)=(1-\beta)R^3(t-1,l)+Y^{(l)}_{t-1}\beta$
			    \State $R_a^3(t,l)=(1-\beta)R^3(t-1,l)+Y^{(l)}_{t-1}\mathbbm{1}(t-1\in A^{(l)})\beta$
			    \EndFor
				\State $\lambda^{(m)}(t) \gets \mu^{(m)} \cdot s(h(t)) + \sum^M_{l=1}{K_{l,m}R^3(t,l)} + \sum^M_{l=1}{\alpha_{l,m}R_a^3(t,l)}$
				\State Generate $Y^{(m)}_t \sim \Poisson(\lambda^{(m)}(t))$
				\State Generate $u \sim \Uniform(0,1)$
				\If{$Y^{(m)}_t>0$ and $u<p$} 
				\State Set $t\in A^{(m)}$
				\EndIf
		\EndFor
	\EndFor
	\end{algorithmic} 
\end{algorithm}

\section{Results}\label{Results}

We fit four model variations to the twelve wards in the forensic psychiatric hospital between the dates 26$^{th}$ June 2012 to 1$^{st}$ January 2019. We denote time interval over which we train our models as $T_{train}$. We use a validation, $10\%$ of the dataset, to perform cross-validations on the model for the regularisation techniques described below. We denote this set by $\textbf{T}_{val}$. A hold out test set over the final $15\%$ of the dataset is used to evaluate the predictive performance of the model variations. We denote this test interval as $\textbf{T}_{test}$. We first define four models which we will be comparing their in-sample and predictive performance on our hospital data. Our first and simplest model is the homogeneous Poisson process. We then define our next comparative model to be twelve individual univariate Hawkes processes relating to each ward. Next we define a multivariate Hawkes process and finally a multivariate Hawkes process with alarm covariates as defined in Section \ref{Model}. More formally, we define our models for comparison as follows

\begin{itemize}
  \item \textbf{Inhomogeneous  Poisson Process (IPP).} Consider an inhomogeneous Poisson process where the number of events in the interval $t$ in the $m^{th}$ ward is Poisson with rate
  
  $$\lambda^{(m)}(t) = \mu^{(m)} \cdot s(h(t)),$$ 
  
  where $s(h(t))$ is the seasonal component defined in \ref{Model: Seasonality}. \newline 
  
  \item \textbf{Univariate Hawkes Processes (UHP).} Let each ward in the forensic hospital be a univariate discrete Hawkes processes as defined in Section \ref{univariate discete Hawkes}. That is, the intensity function is defined as

\begin{equation*}
    \lambda^{(m)}(t) = \mu^{(m)} \cdot s(h(t)) + K_m\sum_{i:t_{m,i} \leq t-1}{Y^{(m)}_{t_{m,i}} \beta(1-\beta)^{t-t_{m,i}-1}}.
\end{equation*}

where $K_m>0$, for each $m=1,\ldots,M$, and $s(h(t))$ is the seasonal component defined in Section \ref{Model: Seasonality}. \newline
  
  \item \textbf{Multivariate Hawkes Process (MHP/MHP-R).} Consider an twelve-dimensional discrete Hawkes process as defined in Section \ref{Multivariate discrete Hawkes}, where we model each ward by a dimension in the process. That is, 

\begin{equation*}
    \lambda^{(m)}(t) =  \mu^{(m)} \cdot s(h(t)) + \sum^M_{l=1}{K_{lm}\sum_{i:t_{l,i}\leq t-1}{Y^{(l)}_{t_{l,i}} \beta(1-\beta)^{t-t_{l,i}-1}}}.
\end{equation*}

where $s(h(t))$ is the seasonal component defined in Section \ref{Model: Seasonality}. We implement a Ridge penalty, with hyperparameter $\lambda_h^{MHP}$, in the likelihood function where we implement a ridge penalty on the off-diagonal elements in the $K$-matrix and denote this model by by \textbf{MHP-R}.
  
  \item \textbf{Multivariate Hawkes Process with Alarm Covariates (MHPA/MHPA-R).} Consider a twelve-dimensional discrete Hawkes process with the sounding of alarms used as marks in the process, as in Section \ref{Model}, where we model each ward by a dimension in the process. The intensity function is as follows,

\begin{equation*}
    \lambda^{(m)}(t) = \mu^{(m)} \cdot s(h(t)) + \sum^M_{l=1} \sum_{i:t_{l,i}\leq t-1} Y^{(l)}_{t_{l,i}} \Big( K_{l,m} + \alpha_{l,m} \mathbbm{1}(m(t_{l,i})) \Big)  \beta(1-\beta)^{t-t_{l,i}-1}.
\end{equation*}

We define a regularized model with the optimal hyperparameter, as determined against the validation set by \textbf{MHPA-R}. That is, we implement a Ridge penalty, with hyperparameter $\lambda_h^{MHPA}$, in the likelihood function where we penalise the off-diagonal elements in the $K$-matrix and all elements of the $\alpha$-matrix, as described in Section \ref{regularization}.

\end{itemize}

\subsection{Predictive Model Comparison}
We use the predictive log-likelihood \cite{Daniel_forecasting} on the events in the test data, given the previous observed event counts, as a metric for predictive performance. More formally, suppose we have observations at time points $1,\ldots,n$ denoted by $\textbf{Y}=(Y_1,\ldots,Y_n)^{'}$, and we would like to calculate the probability of observing the out-of-sample observation $Y_{n+1}$. To do so, we condition on all previous observations and our model parameters fitted on the in-sample data. That is we define the predictive log-likelihood for the observation $Y_{n+1}$ to be, 

\begin{equation}\label{pLL equation - 1 obs}
    pLL(y_{n+1}|\textbf{Y},\hat{\theta}) = p(Y_{n+1}|\textbf{Y},\hat{\theta}),
\end{equation}

where $\hat{\theta}$ is the MLE estimated on a subset of the in-sample data (in our application we use part of the in-sample data as a validation set for finding the optimal hyperparameters). To calculate the probability of observing $\textbf{Y}_{test}=(Y_{n+1},\ldots,Y_{n+m})^{'}$, we calculate the probability of observing each event in the test set conditioned on all previous observations, including the earlier events in the test set. That is, 

\begin{equation}\label{pLL equation}
    pLL(\textbf{Y}_{test}|\textbf{Y},\hat{\theta}) = \prod_{k=1}^{m} p(Y_{n+k}|\textbf{Y}, Y_{n+1},...,Y_{n+k-1},\hat{\theta}).
\end{equation}

We calculate the predictive log-likelihood for each of our models defined at the beginning of this section, using the notation $pLL$ superscript by the model abbreviations. We use the notation '-R' after the model abbreviation to denote the optimal regularized model. When commenting on the results, for the sake of presentation, we only refer to the model abbreviation. That is, if there is increased predictive performance due to regularization, we will be referring to the regularized model. (We found that the optimal hyperparameters are $\lambda_h^{MHP} = 0.5$ and $\lambda_h^{MHPA} = 0.5$). We report these for each ward individually and for the entire hospital. The best (i.e. highest predictive log-likelihood) model for each product is shown in bold. We use this as a metric for each models predictive performance. Note that ward numbers in Table \ref{tab:pLL} relate to those of the ward layout described in Figure \ref{fig:ward layout}. 

Table \ref{tab:pLL} reveals some interesting findings. First, all models beat the simplest model, IPP, which forecasts based on a fixed background rate for each ward with a daily-seasonal component. We observe that the UHP model provides quite a significant improvement in predictive performance compared to the IPP model. That is, the model benefits from the inclusion of self-excitation in the model. Next, we extend the univariate model variation UHP to its multivariate variant MHP. This improvement in predictive performance indicates that the count process not only benefits from the inclusion of self-excitation in the model but also from cross-excitation between wards. That is, there is some information flow between wards which increases the rate of event occurrences. We see a slight increase in predictive performance going from the MHP model to the regularized MHP model. Finally, our proposed model MHPA with and without regularization outperforms all other models overall. This implies that the presence of the alarm as a covariate plays a significant role in the model, but due to the large number of parameters being introduced when compared to the MHP model (144 extra parameters), regularization is needed in order to reduce the chance of overfitting. 

We can illustrate how the models differ by reporting the average instance triggering probabilities due to the background rate, self/cross-excitation and the extra alarm self/cross-excitation over each event. Table \ref{tab: triggering probabilities} and Figure \ref{fig:proportion graphs} presents some findings that distinguish the models introduced at the beginning of this section. This gives some indication to the reasons behind the increased predictive performance of each model. As we can see, on average, for the UHP model, the probability that each event was due to excitation was about 13\%. Whereas the IPP model, by construction, is dependant only on the background rate. That is a large proportion of the events occurred due to excitation. Going from the UHP model to the multivariate version MHP, Table \ref{tab: triggering probabilities} shows that on average the probability of any event being due to excitation is about 2\% larger. In fact, we can see that in fact 79\% of the excitation was due to cross excitation between the wards. The benefit of this is indicated by the improved predictive performance in the latter model. We do not see any differences (in reported significant figures) between MHP and its regularized counterpart MHP-R, which is indicative of the similar predictive performance of the two models in Figure \ref{tab:pLL}. By including the parameters representing the effect of the alarms with the MHPA model and comparing with the MHP and MHP-R models, we see that on average the probability that an event is caused by the extra excitation due to an alarm being sounded was nearly 2\% on average. 
We can see that this excitation comes in turn for reduced probability that the events were caused by non-alarm excitation and the effect of the background rate stays constant. This could imply that part of the excitation captured in the MHP model was due to the alarms being sounded and this added information could in turn be a reason for the increased predictive performance between the models. When we look at the regularized model MHPA-R, we can see that there is not any significant change in the alarm excitation, still sitting at 2\%, but rather an increase in background excitation and a decrease in non-alarm excitation. A further observation to make here is the split between self-excitation (excitation in the same ward) and cross-excitation (excitation between wards). We can see that 75\% of the alarm excitation is cross-excitation. Implying that the effect of an alarm is more spread out throughout the hospital. In contrast, 80\% of the the non-alarm excitation is in the same ward (this is similar in the other multivariate models). Implying that the Alarm is three times more of a trigger in other areas of the hospital than it is in the same ward. 

The average extra alarm excitation of $2\%$ in the MHPA-R model may seem quite low at first glance. However, since these probabilities peak after an alarm has been sounded, and since alarms are exclusively for rare events, it is beneficial to visualise the peaks of the probabilities over a few events. Figure \ref{fig:proportion graphs} shows, from left-to-right over a period in the middle of our dataset, gives the probabilities that each event was caused by the background rate, by events that had not sounded an alarm and the probability that the events were caused by the extra excitation caused by an alarm. For example we can see there are large spikes around the $45^{th}$ event. In fact, in this region the probability that the event was caused by the extra alarm excitation rose to about $75\%$. Hence, it is more likely that the event was caused by the extra excitation caused by a sounded alarm than any other form of excitation or the background rate.

\begin{table}
\centering
\begin{tabular}{c c c c c c c} 
\hline
Ward & $pLL^{IPP}$ & $pLL^{UHP}$ & $pLL^{MHP}$  & $pLL^{MHP-R}$ & $pLL^{MHPA}$ & $pLL^{MHPA-R}$ \\
\hline
   1 & -523.6 & -496.8 & \textbf{-497.3} &  \textbf{-497.3} & \textbf{-497.3} &  \textbf{-497.3}\\
    2 & -51.2 & \textbf{-49.7} & -49.8 &  -49.8 & -50.0 &  \textbf{-49.7}\\
    3 & -203.8 & -194.6 & -194.6 &  -194.6 & -194.6 & \textbf{-194.4}\\
    4 & -364.0 & -348.7 & \textbf{-348.6} & \textbf{-348.6} & -348.7 & -348.9 \\
    5 & -1551.1 & -1520.5 & \textbf{-1518.1} & \textbf{-1518.1} & -1519.6 & -1518.8\\
    6 & -234.9 & -222.2 & -222.0 & -222.0 & \textbf{-220.9} & -221.0\\
    7 & -313.3 & -294.7 & -294.7 & -294.7 & \textbf{-292.9} & \textbf{-292.9}\\
    8 & -64.1 & \textbf{-61.3} & -61.6 & -61.6 & -61.6 & -61.6\\
    9 & -82.6 & -79.0 & \textbf{-79.4} & \textbf{-79.4} & \textbf{-79.4} & -79.5\\
    10 & -1029.4 & -968.4 & -968.4 & -968.4 & -967.9 & \textbf{-967.6}\\
    11 & -420.4 & -404.3 & -404.5 & -404.5 & -403.9 & \textbf{-403.6}\\
    12 & -108.5 & -101.1 & -101.1 & -101.1 & \textbf{-101.0} & \textbf{-101.0}\\
    \hline
    Overall & -4946.9 & -4742.0 & -4740.1 & -4740.1 & -4737.8 & \textbf{-4736.4} \\
\hline
\end{tabular}
\caption{Log-likelihood values under each of the specified models on the test data set, $\textbf{T}_{test}$.}
\label{tab:pLL}
\end{table}

\begin{table}
\centering
\begin{tabular}{c c c c} 
\hline
Model & Background & Non-Alarm Excitation (self/cross) & Alarm Excitation (self/cross) \\
\hline
IPP & 100\% & 0\% & 0\% \\
UHP & 87\% & 13\% (100\%\ / 0\%) & 0\% \\
MHP & 86\% & 15\% (79\%\ / 21\%) & 0\% \\
MHP-R & 85\% & 15\% (79\%\ / 21\%) & 0\% \\
MHPA & 85\% & 13\% (80\%\ / 20\%) & 2\% (26\%\ / 74\%)\\
MHPA-R & 87\% & 11\% (80\%\ / 20\%) & 2\% (25\%\ / 75\%)\\
\hline
\end{tabular}
\caption{Average probability that each event was caused by different parts of the model.}
\label{tab: triggering probabilities}
\end{table}

We perform one further test based on the predictive performance of the best performing model MHPA-R. A difficulty we face when simulating the MHPA-R model is that we must model rate at which events cause an alarm to be sounded. We do not assume any distribution for the alarm in our model, rather we treat it as a covariate. This is not a problem when fitting the model in our training set since the timings of the alarms are known when they are considered in the likelihood function. To overcome this issue, for each event in the test set we simulate ten times from the MHPA-R model conditioned on all previous events, including the previous events in the test set, up until the next arrival. Thus obtaining an interarrival time without needing to directly model alarm occurrences. We then calculate the proportion of interarrival times in four-hour bins and produce a histogram representing the distribution of interarrival times as shown in Figure \ref{fig:Interarrival hist}. The orange bars in Figure \ref{fig:Interarrival hist} represent 95\% error bars which give two standard deviations from the mean proportion of events occurring in the four-hour windows over the ten simulations. We also present the observed histogram of the arrival times over the test period model by the black dots in the relevant interarrival time bin. As we can see, our model seems to capture quite well the observed interarrival densities with all bar one (20-24 hour bar) of the observed densities falling inside the constructed confidence intervals.

\begin{figure}
\centering
\includegraphics[width=14cm]{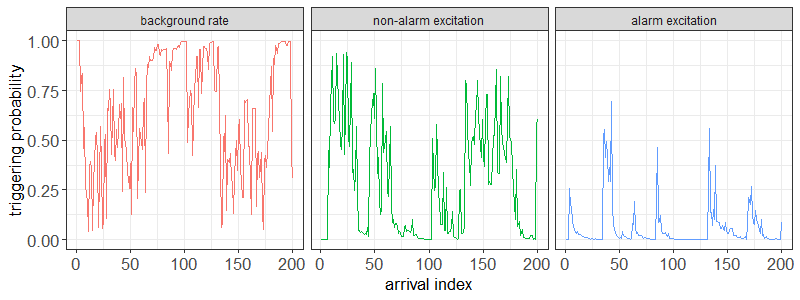}
\caption{Different sources of excitation in the period between the first 200 events. Left plots give probabilities that each event was caused by the background rate, middle plots give probability that events were caused by events that had not sounded an alarm and the right plots give the probability that the event was caused by an alarm. These probabilities were calculated under the model MHPA-R. }
\label{fig:proportion graphs}
\end{figure}

\begin{figure}
\centering
\includegraphics[width=12cm]{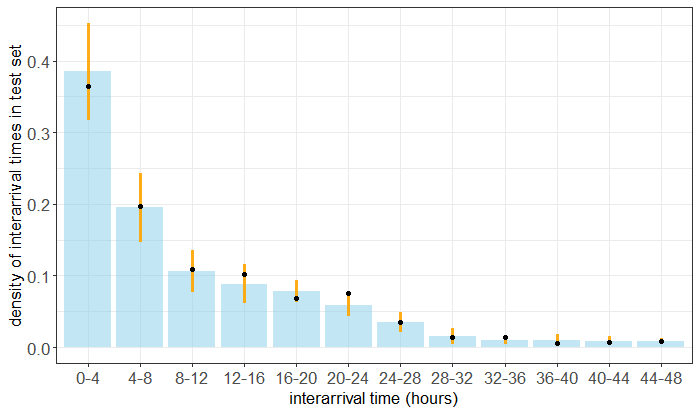}
\caption{Histogram representing the distribution of interarrival times over 10 simulations at each point in the test data (under the MHPA-R model) with 95\% error bars (orange bars). The black dots represents the observed histogram of the arrival times over the test period.}
\label{fig:Interarrival hist}
\end{figure}

\subsection{Interpretation of Results}
We now analyse the parameter fits for the MHPA-R model. Figure \ref{fig:mhp k} and Figure \ref{fig:MHPA k and alpha} present heatmaps representing the values of the excitation matrix in the MHP model and the excitation matrices $K$ and $\alpha$ in the MHPA-R model. That is, the magnitude of the entry  $K_{i,j}$ in the excitation matrix $K$ represents the average number of 'direct offspring' in ward $j$ from an non-alarm sounding event in ward $i$. That is, the number of events in ward $j$ directly caused by an earlier non-alarm sounding event in ward $i$. Further events would then be caused by the direct offspring due to the self exciting nature of the process. The magnitude of entry $\alpha_{i,j}$ in the excitation matrix $\alpha$ represents the average number of direct offspring in ward $j$ from an alarm sounding event in ward $i$. Comparing the $K$-matrices for both models, it can be seen that the shapes of the heatmaps are very similar with the MHPA-R $K$-matrix being scaled down compared to the MHP $K$-matrix. Both matrices are heavily concentrated around the diagonals, implying that a lot of excitation occurs within the same ward. That is, when an event occurs in the same ward and does not cause an alarm to be sounded, further events are caused within the same ward (with lesser events being caused in other wards). Some cross-excitation can also be seen in both matrices and are very similarly structured. Looking at the $\alpha$-matrix in the MHPA-R model, we can see a much more spread out matrix where there is less excitation concentrated along the diagonal, as expected. We expect that the excitation matrix when an alarm has been sounded to spread over the entire hospital. Furthermore, we can see that there is some evidence of excitation due to an alarm in the same ward. As mentioned in previous sections, we would expect that the action of witnessing an event which has caused an alarm to be sounded to have an effect on the rate of future events. In Figure \ref{fig:background}, we analyse the estimated parameters in the rest of our model. Namely, the background rate and the excitation kernel. Figure \ref{fig:background} reports the magnitude of the fitted background constants and the fitted excitation kernel. An interesting result to note here is the mean of our excitation kernel. We could call this the expected 'memory' of the patients throughout the hospital. From the graph we can see that this expected memory is about 1.7 days, which is quite intuitive result when we consider the period in which a patient might have the past event on their mind.

\begin{figure}
\centering
\includegraphics[width=6.5cm]{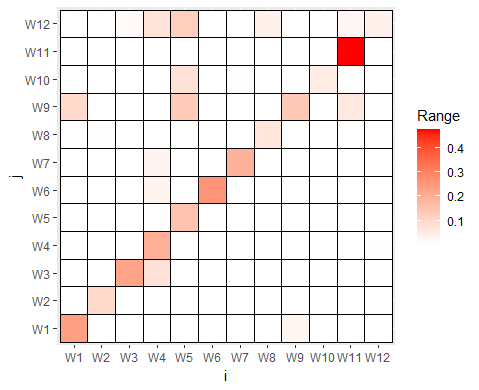}
\caption{Heatmap representing the values of the excitation-matrix, $K$, under the MHP-R model}
\label{fig:mhp k}
\end{figure}

\begin{figure}
    \centering
    \subfloat[\centering $K$]{{\includegraphics[width=6.5cm]{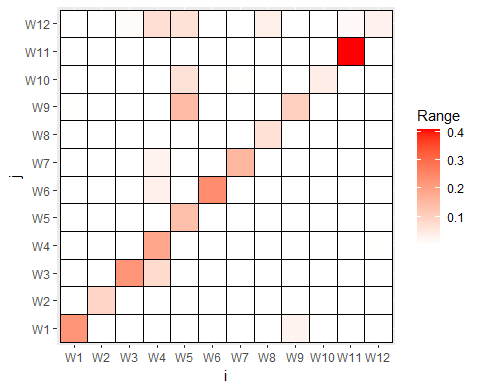} }}%
    \,
    \subfloat[\centering $\alpha$ ]{{\includegraphics[width=6.5cm]{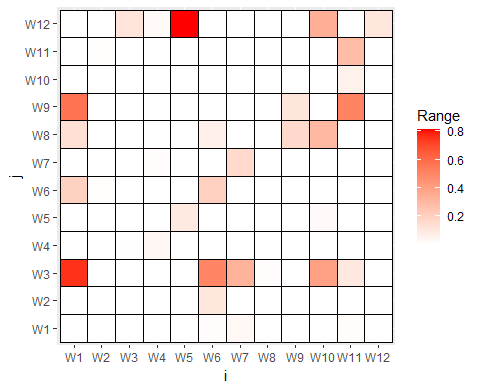} }}%
    \caption{Heatmap representing the values of the self- and cross-excitations from the excitation matrices $K$ and $\alpha$ under the MHPA-R model.}
    \label{fig:MHPA k and alpha}
\end{figure}

\begin{figure}
    \centering
    \subfloat[\centering Plot of fitted background parameters, $\mu^{(m)}$, for each ward $m$. The background rate is proportional to $\mu^{(m)}$ multiplied by the histogram as in Figure \ref{fig:prelim season}.]{{\includegraphics[width=7cm]{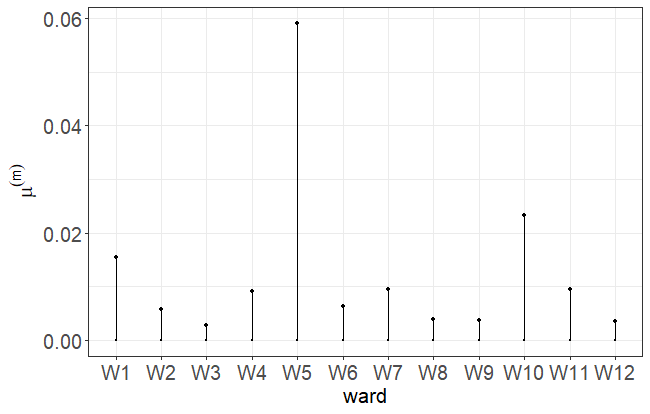} }}%
    \,
    \subfloat[\centering Plot of fitted geometric excitation kernel, $g(t)$. Dotted line represents the expected value, $1/\beta$.]{{\includegraphics[width=7cm]{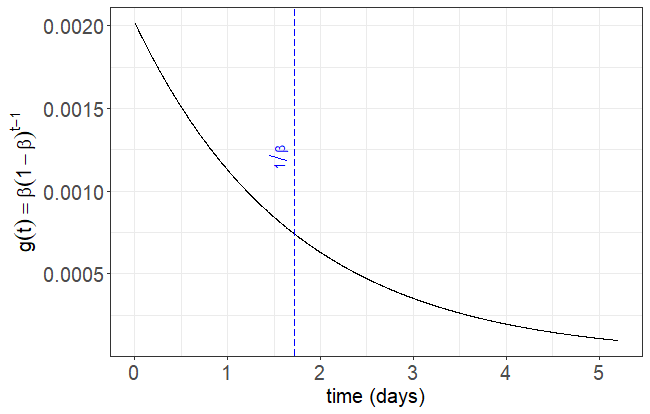} }}%
    \caption{Figures visualising background parameters (scaling constants) and the estimated excitation kernel for MHPA-R model.}
    \label{fig:background}
\end{figure}

\subsection{Forecasting}\label{Forecasting}

In Figure \ref{fig:forecasts}, predictions for the total number of instances over a period of five weeks of the unseen test set, $\textbf{T}_{test}$ are presented. We implement the algorithm developed in Section \ref{Forecasting}, where we estimate the probability of any given event sounding an alarm to be $p=13.66\%$. We do this by assuming a Bernoulli distribution for the alarm generating process and finding the maximum likelihood estimator, the sample mean, from the training data. That is, we set $p$ to be the proportion of events that triggered an alarm in the training data. The filled shapes represent the distributions of one hundred forecasts using the MHPA-R model given the timing and locations of all the previous events. The dotted line represents the sample mean of these forecasts and the solid line indicates the observed number of events over the period. For all wards the true number of events fall within $95\%$ of our forecasts. That is, the observed events fall within the reasonable region of what our model predicts. This is a good indication of a well suited model. The ability to accurately forecast these violent event, with a level of uncertainty quantification, would allow for better organisation of staff within the hospital, which would hopefully help to reduce the number of events in the future.

\begin{figure}
\centering
\includegraphics[width=15cm]{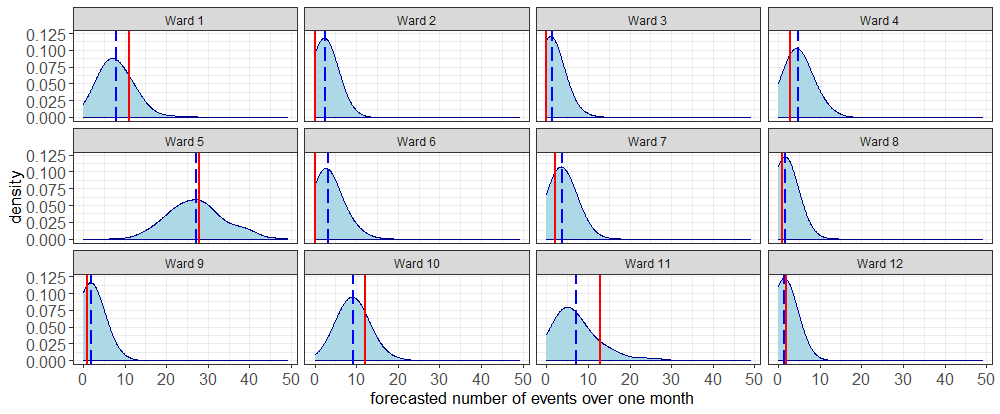}
\caption{Predictive distributions for the number of events over a one-month period for each ward (MHPA-R model). The blue dotted line represents average over one hundred simulations and the red line represents the observed number of events over the interval. The constant probability that an event causes an alarm to be sounded is calculated to be $p=0.1366$.}
\label{fig:forecasts}
\end{figure}

\section{Conclusion}\label{Conclusion}

In this work we tackled two problems. Firstly, in order to apply a discrete Hawkes model to a large dataset, we developed an efficient estimation procedure. We tackled this problem by using a geometric distribution for the excitation kernel and maintained an additional data structure which took a constant space for both the likelihood function and its gradient. Next, we introduced a marked discrete Hawkes process model for modelling the occurrences of violent events within a forensic psychiatric hospital. Our model specification is interpretable and allows for a deeper understanding of the extent to which self-excitation and cross-excitation between wards influence future events throughout the hospital and the role that the alarm system plays as a covariate. We demonstrated that our model offers a significant improvement in the predictive performance when compared to other standard models. This model has important implications in a logistical sense. Firstly, it allows for accurate predictions of the numbers of events within each ward. This could allow for protocol changes within the forensic hospital which may reduce rates of violence in areas with a high number of predicted violent events, such as staff make up, treatment programs and reducing the visibility of the alarm for the patients. Developing an understanding of the the rates of violence and aggression would be useful in improving efficacy of staffing and hopefully lead to an overall reduction in rates of violence. The model reveals a strong excitation component to the events, both with and without an alarm being sounded, which could warrant further investigation into potential underlying factors that could explain the observed excitation. From a psychological point of view, what the excitation effect is likely describing is that being made aware of another incident of violence and aggression might have a triggering effect on other patients through a possible social contagion effect, as described in \cite{Beck}. It is also important to note that the majority of people within forensic psychiatric hospitals are likely to have experienced a lot of trauma. Thus, one would expect them to be hyper vigilant for danger and more sensitive to potential cues of danger, such as the alarm system. This work could be extended in some interesting directions. Firstly, we could relax the condition that the excitation kernel is 1) different for an alarm with and without an event and 2) different between wards. Our approach specified a both of these kernels to have the same shape. The reason we do this is two-fold: To maintain interpretability in our model, which was the primary motivation for this analysis, and identifiability issues. Another interesting direction is to include the type of event into the model. The staff recorded the type of event which took place, such as physical violence or verbal abuse. This seems like a natural extension to our model, but also introduces identifiability issues.\newpage

\bibliographystyle{plainnat}  
\bibliography{templateArxiv}

\newpage
\begin{appendices}
\section{Appendix}

\subsection{Proof of Proposition \ref{First likelihood prop}}\label{First likelihood prop - proof}
We have the following

\begin{align*}
   \sum^{N_{\delta}}_{t=1}\lambda^{(m)}(t) - \sum^{N_{\delta}}_{t=1}\mu^{(m)}(t)  &= \sum^{N_{\delta}}_{t=1}\sum^M_{l=1}\sum_{i:t_{l,i}\leq t-1}Y_{t_{l,i}}^{(l)} f_{l,m}(m(t_{l,i})) g_{l,m}(t-t_{l,i}) \\
   &= \sum^M_{l=1}\sum^{N_{\delta}}_{t=1} \sum^{N_l}_{n=1} Y_{t_n}^{(l)} f_{l,m}(m(t_{l,n})) g_{l,m}(t-t_{l,n})\mathbbm{1}(t_{l,n}<t) \\
   &= \sum^M_{l=1}\sum^{N_l}_{n=1} Y_{t_n}^{(l)} f_{l,m}(m(t_{l,n})) \sum^{N_{\delta}}_{t=1}  g_{l,m}(t-t_n)\mathbbm{1}(t_n<t) \\
   &= \sum^M_{l=1}\sum^{N_l}_{n=1} Y_{t_n}^{(l)} f_{l,m}(m(t_{l,n})) G_{l,m}(N_{\delta}-t_n) \\
   &= \sum^M_{l=1}\sum_{t\in \tau^{(l)}} Y_{t}^{(l)} f_{l,m}(m(t)) G_{l,m}(N_{\delta}-t)
\end{align*}

Finally, summing over $m=1,\ldots,M$ gives the required result.

\subsection{Proof of Corollary \ref{corollary}}\label{corollary - proof}

Applying Proposition \ref{First likelihood prop}, we obtain

\begin{multline*}
    \log L(\theta|\tau,A) = \sum^M_{m=1} \sum_{t\in \tau^{(m)}}  Y^{(m)}_t\log(\lambda^{(m)}(t)) \\
    - \sum^M_{m=1}\sum^M_{l=1} \sum_{t\in \tau^{(l)}} Y_t^{(l)} \Big( K_{l,m} + \alpha_{l,m} \mathbbm{1}(m(t_{l,i})) \Big) (1-(1-\beta)^{N_{\delta}-t})
    - \sum^M_{m=1} \mu_0^{(m)} \sum^{N_{\delta}}_{t=1} s(h(t)).
\end{multline*}

We can expand the bracket and use our notation $A^{(l)}$ as discussed previously to obtain the required result.

\subsection{Proof of Proposition \ref{Recursion 1}}\label{Recursion 1 - proof}

Note that for $j\in \{1,...,N_m-1\}$,

\begin{multline}
    \sum_{i:t_{l,i}\leq t_{lj+1}-1}Y^{(l)}_{t_{l,i}} \beta(1-\beta)^{t_{l,j+1}-t_{l,i}-1} = (1-\beta)^{t_{l,j+1}-t_{l,j}} \sum_{i:t_{l,i}\leq t_{l,j}-1} Y^{(l)}_{t_{l,i}} \beta(1-\beta)^{t_{l,j}-t_{l,i}-1} \\ 
    + Y^{(l)}_{t_{l,j}}\beta(1-\beta)^{t_{l,j+1}-t_{l,j}-1}, 
\end{multline}

and 

\begin{multline}
    \sum_{i:a_{l,i}\leq t_{l,j+1}-1}Y^{(l)}_{t
    a_{l,i}} \beta(1-\beta)^{t_{l,j+1}-a_{l,i}-1} = (1-\beta)^{t_{l,j+1}-t_{l,j}} \sum_{i:a_{l,i}\leq t_j-1} Y^{(l)}_{a_{l,i}} \beta(1-\beta)^{t_{l,j}-a_{l,i}-1} \\ 
    + Y^{(l)}_{t_{l,j}}\mathbbm{1}(t_{j,l}\in A^{(l)})\beta(1-\beta)^{t_{l,j+1}-t_{l,j}-1}, 
\end{multline}

$\forall l \in \{1,...,M\}, j=1,...,N-1$. Then, if we define,

\begin{equation}
    R(j,l):=\sum_{i:t_{l,i}\leq t_{l,j}}Y^{(l)}_{t_{l,i}} \beta(1-\beta)^{t_{l,j}-t_{l,i}-1},
\end{equation}

and 

\begin{equation}
    R_a(j,l):=\sum_{i:a_{l,i}\leq t_j-1} Y^{(l)}_{a_{l,i}} \beta(1-\beta)^{t_{l,j}-a_{l,i}-1},
\end{equation}\label{Ra}

then we obtain the required result.

\begin{lem}\label{partials - alarm}

For the log-likelihood function with conditional intensity function as defined in Equation \eqref{Alarm lambda}, we have the following;

\begin{equation}\label{mu gradient}
    \frac{\partial}{\partial \mu_0^{(p)}} \log L(\theta|\tau,A) = \sum_{t \in \tau^{(p)}} \frac{Y_t^{(p)}s(h(t))}{\lambda^{(p)}(t)} - \sum^{N_{\delta}}_{t=1} s(h(t)),
\end{equation}

\begin{equation}\label{k gradient}
    \frac{\partial}{\partial K_{p,q}} \log L(\theta|\tau,A) = \sum_{t \in \tau^{(q)}} \frac{Y_t^{(q)}}{\lambda^{(q)}(t)} \sum_{i:t_{p,i}\leq t-1}Y^{(p)}_{t_{p,i}} g(t-t_{p,i}) - \sum_{t\in \tau^{(p)}} Y_t^{(p)}G(N_{\delta}-t),
\end{equation}

\begin{equation}\label{alpha gradient}
    \frac{\partial}{\partial \alpha_{p,q}} \log L(\theta|\tau,A) = \sum_{t \in \tau^{(q)}} \frac{Y_t^{(q)}}{\lambda_t^{(q)}} \sum_{i:a_{p,i}\leq t-1}Y_{a_{p,i}}^{(p)}g(t-a_{p,i})- \sum_{a\in A^{(p)}}Y_a^{(p)}G(N_{\delta}-a),
\end{equation}

\begin{multline}\label{beta gradient}
    \frac{\partial}{\partial \beta} \log L(\theta|\tau,A)
    = \sum_{t \in \tau^{(m)}} \sum_{m=1}^{M} \frac{Y_t^{(m)}}{\lambda^{(m)}(t)} \Bigg( \sum_{l=1}^{M} K_{l,m} \sum_{i:t_{l,i}\leq t-1} Y_{t_{l,i}}^{(l)} \frac{\partial}{\partial \beta} g(t-t_{l,i}) \\
    + \sum_{l=1}^{M} \alpha_{l,m} \sum_{i:a_{l,i}\leq t-1} Y_{a_{l,i}}^{(l)} \frac{\partial}{\partial \beta} g(t-a_{l,i})\Bigg) \\
    - \sum_{m=1}^{M} \sum_{l=1}^{M} \sum_{t\in \tau^{(l)}} K_{l,m} Y^{(l)}_t \frac{\partial}{\partial \beta}G(N_{\delta}-t) -  \sum_{m=1}^{M} \sum_{l=1}^{M} \sum_{a\in A^{(l)}} \alpha_{l,m} Y^{(l)}_t \frac{\partial}{\partial \beta}G(N_{\delta}-a), 
\end{multline}

where $g(t)=\beta(1-\beta)^{t-1}$, $G(t)=1-(1-\beta)^t$.  

\end{lem}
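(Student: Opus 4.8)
The plan is to differentiate the closed-form log-likelihood supplied by Corollary \ref{corollary} directly, term by term, handling the four groups of parameters separately. Write $\log L = S_0 - S_K - S_\alpha - S_\mu$, where $S_0 = \sum_{m}\sum_{t\in\tau^{(m)}} Y_t^{(m)}\log\lambda^{(m)}(t)$ is the likelihood part (with $\lambda^{(m)}(t)$ as in \eqref{Alarm lambda}) and $S_K$, $S_\alpha$, $S_\mu$ are the three compensator sums of Corollary \ref{corollary}, which are already expressed through the cumulative mass $G(t)=1-(1-\beta)^t$. Differentiating $S_0$ always produces the factor $\frac{\partial}{\partial\theta}\log\lambda^{(m)}(t) = \frac{1}{\lambda^{(m)}(t)}\frac{\partial\lambda^{(m)}(t)}{\partial\theta}$, so the whole computation reduces to reading off $\partial\lambda^{(m)}(t)/\partial\theta$ from \eqref{Alarm lambda} and then differentiating the explicit (purely $\beta$-dependent) compensator sums. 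The structural fact that makes three of the four cases easy is that each of $\mu_0^{(p)}$, $K_{p,q}$ and $\alpha_{p,q}$ enters $\lambda^{(m)}(t)$ linearly and for only a single output dimension $m$.

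First I would treat $\mu_0^{(p)}$. The background term $\mu^{(m)}s(h(t))$ depends on $\mu_0^{(p)}$ only when $m=p$, so $\partial\lambda^{(m)}(t)/\partial\mu_0^{(p)} = s(h(t))$ for $m=p$ and is zero otherwise; this collapses the sum over $m$ in $\partial S_0$ to the single dimension $p$, and since $S_K,S_\alpha$ are free of $\mu$ while $S_\mu$ is linear in $\mu_0^{(p)}$, the compensator contributes exactly $-\sum_{t=1}^{N_\delta}s(h(t))$, giving \eqref{mu gradient}. The cases of $K_{p,q}$ and $\alpha_{p,q}$ are structurally identical: in \eqref{Alarm lambda} the coefficient $K_{l,m}$ (resp.\ $\alpha_{l,m}$) survives differentiation only for $l=p$, $m=q$, so $\partial\lambda^{(q)}(t)/\partial K_{p,q} = \sum_{i:t_{p,i}\le t-1} Y^{(p)}_{t_{p,i}}\, g(t-t_{p,i})$ and, after restricting the indicator $\mathbbm{1}(t_{p,i}\in A^{(p)})$ to the alarm times of $A^{(p)}$, $\partial\lambda^{(q)}(t)/\partial\alpha_{p,q} = \sum_{i:a_{p,i}\le t-1} Y^{(p)}_{a_{p,i}}\, g(t-a_{p,i})$. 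Each of these two parameters appears linearly in exactly one compensator sum, whose derivative supplies the single $G(N_\delta-\cdot)$ term in \eqref{k gradient} and \eqref{alpha gradient} respectively.

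The only genuinely more involved case is $\beta$, which enters every term through $g(t)=\beta(1-\beta)^{t-1}$ and $G(t)=1-(1-\beta)^t$, so no index collapses. Differentiating $S_0$ yields, for each $m$ and each event $t$, the factor $1/\lambda^{(m)}(t)$ multiplying $\sum_{l} K_{l,m}\sum_{i} Y^{(l)}_{t_{l,i}}\frac{\partial}{\partial\beta}g(t-t_{l,i}) + \sum_{l}\alpha_{l,m}\sum_{i} Y^{(l)}_{a_{l,i}}\frac{\partial}{\partial\beta}g(t-a_{l,i})$, while differentiating $S_K$ and $S_\alpha$ produces the two $\frac{\partial}{\partial\beta}G(N_\delta-\cdot)$ compensator terms. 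I would deliberately leave $\frac{\partial}{\partial\beta}g$ and $\frac{\partial}{\partial\beta}G$ unexpanded, since the statement of \eqref{beta gradient} is phrased in exactly these terms and the explicit expansion is deferred to the recursive reformulation of Proposition \ref{corr 1}. The main obstacle is therefore not analytic but organisational: tracking which $(l,m)$ pairs and which event- and alarm-index sets each partial derivative touches, and checking that the indicator-to-alarm-time substitution is applied consistently across both the intensity and the compensator contributions.
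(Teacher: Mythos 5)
Your proposal is correct and follows essentially the same route as the paper, whose entire proof is the one-line remark ``by application of the chain rule'': you differentiate the log-likelihood (equivalently, the Corollary~\ref{corollary} form) term by term, use the linearity of $\lambda^{(m)}(t)$ in $\mu^{(p)}$, $K_{p,q}$ and $\alpha_{p,q}$ to collapse the indices, and leave $\partial_\beta g$ and $\partial_\beta G$ unexpanded exactly as the statement does. Your write-up simply makes explicit the bookkeeping the paper leaves implicit.
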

    
\begin{proof}
    By application of the chain rule.
\end{proof}

\subsection{Proof of Proposition \ref{corr 1}}\label{corr 1 - proof}

For the first three sets of partial derivatives, we can simply replace replace $\sum_{i:t_{p,i}\leq t-1}Y^{(p)}_{t_{p,i}} g(t-t_{p,i})$ and $\sum_{i:a_{p,i}\leq t-1}Y_{a_{p,i}}^{(p)}g(t-a_{p,i})$ in Lemma \ref{partials - alarm} by their recursive definitions as defined in the proof of Proposition \ref{Recursion 1}, which then allows us to write $\lambda^{(m)}(t)$ in its recursive form. Now, note that we can write Equation \eqref{beta gradient} as 

\begin{align*} 
\frac{\partial}{\partial \beta} G(N_{\delta}-t) &= \frac{\partial}{\partial \beta} \Big( 1- (1-\beta)^{N_{\delta}-t} \Big) \\ 
&=  (N_{\delta}-t)(1-\beta)^{N_{\delta}-t-1},
\end{align*}

and 

\begin{align*}
    \frac{\partial}{\partial \beta} g(t-t_{l,i}) &= \frac{\partial}{\partial \beta} \Big( \beta(1-\beta)^{t-t_{l,i}-1} \Big) \\
   &=  (1-\beta)^{t-t_{l,i}-2}(1-\beta(t-t_{l,i})).
\end{align*}

Hence we can write

\begin{multline}
    \frac{\partial}{\partial \beta} \log L(\theta|\tau,A) \\ 
    = \sum_{t \in \tau^{(m)}} \sum_{m=1}^{M} \frac{Y_t^{(m)}}{\lambda^{(m)}(t)} \Bigg( \sum_{l=1}^{M} K_{l,m} \sum_{i:t_{l,i}\leq t-1} Y_{t_{l,i}}^{(l)} (1-\beta)^{t-t_{l,i}-2}(1-\beta(t-t_{l,i})) \\
    + \sum_{l=1}^{M} \alpha_{l,m} \sum_{i:a_{l,i}\leq t-1} Y_{a_{l,i}}^{(l)} (1-\beta)^{t-a_{l,i}-2}(1-\beta(t-a_{l,i}))\Bigg) \\
    - \sum_{m=1}^{M} \sum_{l=1}^{M} \sum_{t\in \tau^{(l)}} K_{l,m} Y^{(l)}_t \frac{\partial}{\partial \beta}G(N_{\delta}-t) -  \sum_{m=1}^{M} \sum_{l=1}^{M} \sum_{a\in A^{(l)}} \alpha_{l,m} Y^{(l)}_t \frac{\partial}{\partial \beta}G(N_{\delta}-a). 
\end{multline}

Now define

\begin{align}
    R^1(j,l)&:=\sum_{i:t_{l,i}\leq t-1} Y_{t_{l,i}}^{(l)} (1-\beta)^{t-t_{l,i}-2}, \\ 
    R^1_a(j,l)&:= \sum_{i:a_{l,i}\leq t-1} Y_{a_{l,i}}^{(l)} (1-\beta)^{t-a_{l,i}-2}, \\ 
    R^{2}(j,l)&:=\sum_{i:t_{l,i}\leq t-1} Y_{t_{l,i}}^{(l)} (1-\beta)^{t-t_{l,i}-2}\beta(t-t_{l,i}), \\
    R^{2}_a(j,l)&:=\sum_{i:a_{l,i}\leq t-1} Y_{a_{l,i}}^{(l)} (1-\beta)^{t-a_{l,i}-2}\beta(t-a_{l,i}).
\end{align}

Plugging these sums into our recursive relations then give the required result.

\subsection{Proof of Lemma \ref{R3 - with alarm}}\label{R3 - with alarm - proof}

Define $R^3(t,l):=\sum_{i:t_{l,i}\leq t-1}{Y^{(l)}_{t_{l,i}} \beta(1-\beta)^{t-t_{l,i}-1}}, \forall t=1,\ldots,N_{\delta}$. Then, we have that for $t=1,\ldots,N_{\delta}-1$,

\begin{align*}
    R^3(t+1,l)&=\sum_{i:t_i\leq t}{Y^{(l)}_{t_{l,i}} \beta_{l,m}(1-\beta)^{t-t_{l,i}}} \\
    &=(1-\beta)\sum_{i:t_{l,i}\leq t-1}{Y^{(l)}_{t_{l,i}} \beta(1-\beta)^{t-t_{l,i}-1}} + Y^{(l)}_t\beta\\
    &=(1-\beta)R^3(t,l) + Y^{(l)}_t\beta.
\end{align*}

Next define $R^3_a(t,l):=\sum_{i:a_{l,i}\leq t-1} Y^{(l)}_{a_{l,i}} \beta(1-\beta)^{t-a_{l,i}-1}$, $\forall t=1,\ldots,N_{\delta}$. Then, we have that for $t=1,\ldots,N_{\delta}-1$,

\begin{align*}
    R^3_a(t+1,l)&=\sum_{i:a_{l,i} \leq t}{Y^{(l)}_{a_{l,i}} \beta(1-\beta)^{t-a_{l,i}}} \\
    &=(1-\beta)\sum_{i:a_{l,i} \leq t-1}{Y^{(l)}_{a_{l,i}} \beta(1-\beta)^{t-a_{l,i}-1}} + Y^{(l)}_t\mathbbm{1}(t\in A^{(l)})\beta \\
    &=(1-\beta)R_a^3(t,l) + Y^{(l)}_t\mathbbm{1}(t\in A^{(l)})\beta.
\end{align*}

Finally, by noting $R^3(1,l)=R_a^3(1,l)=0$ we obtain the required result.

\section{Estimation of Multivariate Discrete Hawkes Processes}\label{Hawkes estimation}
We present the results from the main body of this paper for the unmarked discrete Hawkes process. We believe this process could be more widely applicable and thus worth including. 

Consider an unmarked multivariate discrete Hawkes process with intensity function defined as follows

\begin{equation}\label{Hawkes lambda}
    \lambda^{(m)}(t) = \mu^{(m)} + \sum^M_{l=1} K_{l,m} \sum_{i:t_{l,i}\leq t-1} Y^{(l)}_{t_{l,i}}  \beta_{l,m}(1-\beta_{l,m})^{t-t_{l,i}-1}, 
\end{equation}

where $\mu^{(m)}>0$, $\beta_{l,m}>0$ and $K_{l,m}>0$. 

\begin{corollary}\label{corollary appendix}
For the intensity function as defined by Equation \eqref{Hawkes lambda}, by applying Proposition \ref{First likelihood prop}, we can rewrite the log-likelihood function as follows

\begin{equation}
    \log L(\theta|\tau) = \sum^M_{m=1}\sum_{t\in \tau^{(m)}} Y_t^{(m)} \log(\lambda^{(m)}(t)) - \sum_{m=1}^M \sum_{l=1}^M K_{l,m} \sum_{t\in \tau^{(l)}} Y_t^{(l)} (1-(1-\beta_{l,m})^{N_{\delta}-t}) - N_{\delta}\mu^{(m)} 
\end{equation}

\end{corollary}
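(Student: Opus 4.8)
The plan is to obtain Corollary \ref{corollary appendix} as a direct specialization of Proposition \ref{First likelihood prop}, so the first task is to match the unmarked intensity in Equation \eqref{Hawkes lambda} to the factored kernel form $\Phi_{l,m}(t-t_{l,i}) = f_{l,m}(m(t_{l,i}))\,g_{l,m}(t-t_{l,i})$ used in that proposition. Reading off the two factors, I would take $f_{l,m} \equiv K_{l,m}$, a constant weight that carries no dependence on any mark, and $g_{l,m}(t) = \beta_{l,m}(1-\beta_{l,m})^{t-1}$, the geometric probability mass function. I would also note that the background rate in this unmarked setting is the constant $\mu^{(m)}$, in contrast to the seasonally modulated $\mu^{(m)}\cdot s(h(t))$ of the main text; this is the only structural difference from the marked case that affects the final term.

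Next I would compute the cumulative mass function $G_{l,m}$ attached to $g_{l,m}$. Since $g_{l,m}$ is geometric, summing the finite geometric series gives
\begin{equation*}
G_{l,m}(n) = \sum_{t=1}^{n} \beta_{l,m}(1-\beta_{l,m})^{t-1} = 1 - (1-\beta_{l,m})^{n}.
\end{equation*}
This closed form is the one genuine calculation in the argument, and it is exactly what produces the factor $1-(1-\beta_{l,m})^{N_{\delta}-t}$ appearing in the stated conclusion.

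With these ingredients, I would substitute into the conclusion of Proposition \ref{First likelihood prop}. The log-intensity term $\sum_{m}\sum_{t\in\tau^{(m)}} Y_t^{(m)}\log(\lambda^{(m)}(t))$ carries over verbatim. In the middle term, replacing $f_{l,m}(m(t))$ by $K_{l,m}$ and $G_{l,m}(N_{\delta}-t)$ by $1-(1-\beta_{l,m})^{N_{\delta}-t}$ yields precisely $\sum_{m=1}^{M}\sum_{l=1}^{M} K_{l,m}\sum_{t\in\tau^{(l)}} Y_t^{(l)}\big(1-(1-\beta_{l,m})^{N_{\delta}-t}\big)$. For the background contribution, because $\mu^{(m)}(t)=\mu^{(m)}$ is constant in $t$, the inner sum collapses to $\sum_{t=1}^{N_{\delta}}\mu^{(m)} = N_{\delta}\mu^{(m)}$, giving the last term.

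There is no serious obstacle: the statement is a clean specialization of Proposition \ref{First likelihood prop} to mark-free excitation weights and a non-seasonal background. The only points requiring care are the correct evaluation of the geometric cumulative mass function and the observation that the background sum reduces to $N_{\delta}\mu^{(m)}$ precisely because there is no time-of-day modulation in this unmarked model.
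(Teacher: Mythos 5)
Your proof is correct and follows exactly the route the paper intends: the paper's own proof is the one-line remark ``by direct application of Proposition \ref{First likelihood prop}'', and your write-up simply supplies the details of that specialization (identifying $f_{l,m}\equiv K_{l,m}$, evaluating the geometric cumulative mass function as $1-(1-\beta_{l,m})^{n}$, and collapsing the constant background sum to $N_{\delta}\mu^{(m)}$). No further comment is needed.
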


\begin{proof}
    By direct application of Proposition \ref{First likelihood prop}.
\end{proof}

\begin{prop}\label{Recursion 1- no alarm} For $j\in \{1,\ldots,N_m\}$ and $m \in \{1,\ldots,M\}$, the conditional intensity function as in Equation \eqref{Hawkes lambda} recursively as 

\begin{equation}
    \lambda^{(m)}(t_{l,j})= \mu^{(m)} + \sum^M_{l=1}{K_{l,m}R(j,l,m)}
\end{equation}

where $R$ is defined as follows: $\forall l \in \{1,...,M\}$, $j\in \{1,\ldots,N_m-1\}$,

\begin{equation}\label{R no alarm}
    R(j+1,l,m) := (1-\beta_{l,m})^{t_{l,j+1}-t_{l,j}}R(j,l,m)+Y_{t_{l,j}}^{(l)}\beta_{l,m}(1-\beta_{l,m})^{t_{l,j+1}-t_{l,j}-1}
\end{equation}

and we define $R(1,l,m):=0$.

\end{prop}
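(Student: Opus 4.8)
The plan is to follow the same strategy as the proof of Proposition \ref{Recursion 1}, adapting it to the unmarked setting with dimension-dependent decay rates. The essential observation is that the geometric kernel $g_{l,m}(t-t_{l,i})=\beta_{l,m}(1-\beta_{l,m})^{t-t_{l,i}-1}$ obeys a one-step multiplicative relation as the evaluation time advances from one event to the next, so the accumulated excitation contributed by dimension $l$ can be carried forward without recomputing the entire history.

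First I would define the partial sum
\begin{equation*}
    R(j,l,m):=\sum_{i:t_{l,i}\leq t_{l,j}-1}Y^{(l)}_{t_{l,i}} \beta_{l,m}(1-\beta_{l,m})^{t_{l,j}-t_{l,i}-1},
\end{equation*}
which is exactly the inner excitation sum from dimension $l$ that appears in Equation \eqref{Hawkes lambda}. Substituting it back gives $\lambda^{(m)}(t_{m,j})=\mu^{(m)}+\sum_{l=1}^M K_{l,m}R(j,l,m)$ directly. Note the extra index $m$ relative to Proposition \ref{Recursion 1}: because $\beta_{l,m}$ now depends on the target dimension, the accumulator can no longer be shared across $m$ as it was in the fixed-$\beta$ model of the main text. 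The key step is then the recursion itself. Advancing from event $j$ to event $j+1$ in dimension $l$, I would split the sum defining $R(j+1,l,m)$ into the contribution of the events already accounted for at step $j$ plus the single new term at $t_{l,j}$, and factor the common geometric decay $(1-\beta_{l,m})^{t_{l,j+1}-t_{l,j}}$ out of the former:
\begin{multline*}
    \sum_{i:t_{l,i}\leq t_{l,j+1}-1}Y^{(l)}_{t_{l,i}} \beta_{l,m}(1-\beta_{l,m})^{t_{l,j+1}-t_{l,i}-1} = (1-\beta_{l,m})^{t_{l,j+1}-t_{l,j}}R(j,l,m) \\ + Y^{(l)}_{t_{l,j}}\beta_{l,m}(1-\beta_{l,m})^{t_{l,j+1}-t_{l,j}-1},
\end{multline*}
which is precisely the claimed update \eqref{R no alarm}. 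The base case holds because no event precedes the first one, giving $R(1,l,m)=0$.

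The computation is routine, and the main obstacle is purely notational bookkeeping: keeping the three indices $(j,l,m)$ consistent and confirming that shifting the summation boundary from $t_{l,i}\leq t_{l,j}-1$ to $t_{l,i}\leq t_{l,j+1}-1$ picks up exactly the one new event $t_{l,j}$. Since there are no marks in this model, the alarm accumulator $R_a$ of Proposition \ref{Recursion 1} drops out entirely and no further case analysis is needed.
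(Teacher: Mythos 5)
Your proposal is correct and follows essentially the same route as the paper's own proof: define $R(j,l,m)$ as the inner excitation sum from dimension $l$ evaluated at $t_{l,j}$, then peel off the single new event $t_{l,j}$ and factor the common decay $(1-\beta_{l,m})^{t_{l,j+1}-t_{l,j}}$ out of the remaining terms. Your added remark that the accumulator must carry the extra index $m$ because $\beta_{l,m}$ now depends on the target dimension is a correct and worthwhile observation, consistent with the statement.
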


\begin{proof}

Note that for $j\in \{1,\ldots,N_m-1\}$,

\begin{multline}
    \sum_{i:t_{l,i}<t_{l,j+1}}y^{(l)}_{t_{l,i}} \beta_{l,m}(1-\beta_{l,m})^{t_{l,j+1}-t_{l,i}-1} 
    \\ = (1-\beta_{l,m})^{t_{l,j+1}-t_{l,j}} \sum_{i:t_{l,i}<t_{l,j}} y^{(l)}_{t_{l,i}} \beta_{l,m}(1-\beta_{l,m})^{t_{l,j}-t_{l,i}-1} 
    + y^{(l)}_{t_{l,j}}\beta_{l,m}(1-\beta_{l,m})^{t_{l,j+1}-t_{l,j}-1}, 
\end{multline}

Then, if we define,

\begin{equation}
    R(j,l,m):=\sum_{i:t_{l,i}<t_{j}}y^{(l)}_{t_{l,i}} \beta_{l,m}(1-\beta_{l,m})^{t_{l,j}-t_{l,i}-1},
\end{equation}

we obtain the required result.

\end{proof}

\begin{lem}\label{partials - no alarm}
For the intensity function as in Equation \eqref{Hawkes lambda}, the gradient of the log-likelihood function is given by the following partial derivatives;

\begin{equation}
    \frac{\partial}{\partial \mu^{(p)}} \log L(\theta|\tau) = \sum_{t \in \tau^{(p)}} \frac{Y_t^{(p)}}{\lambda_t^{(p)}} - N_{\delta}
\end{equation}

\begin{equation}
    \frac{\partial}{\partial K_{p,q}} \log L(\theta|\tau) = \sum_{t \in \tau^{(q)}} \frac{Y_t^{(q)}}{\lambda_t^{(q)}} \sum_{i:t_{p,i}<t}Y_{t_{p,i}}^{(p)}g_{p,q}(t-t_{p,i}) - \sum_{t\in \tau^{(p)}}Y_t^{(p)}G_{p,q}(N_{\delta}-t)
\end{equation}

\begin{equation}\label{beta gradient - no alarm}
    \frac{\partial}{\partial \beta_{p,q}} \log L(\theta|\tau)
    = \sum_{t \in \tau^{(q)}} \frac{Y_t^{(q)}}{\lambda_t^{(q)}} K_{p,q} \sum_{i:t_{p,i}<t} Y_{t_{p,i}}^{(p)} \frac{\partial}{\partial \beta_{p,q}} g_{p,q}(t-t_{p,i}) 
    -\sum_{t\in \tau^{(p)}} k_{p,q} Y^{(p)}_t \frac{\partial}{\partial \beta_{p,q}}G_{p,q}(N_{\delta}-t),
\end{equation}

where $g(t)=\beta_{l,m}(1-\beta_{l,m})^{t-1}$ and $G(t)$ is the cumulative distribution function for $g(t)$.

\end{lem}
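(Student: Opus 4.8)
The plan is to obtain all three partial derivatives by differentiating the reduced log-likelihood of Corollary \ref{corollary appendix} term by term, so that the cumulative-mass-function reduction is inherited for free rather than re-derived. Writing that log-likelihood schematically as a sum of three pieces --- the ``score'' piece $\sum_m \sum_{t\in\tau^{(m)}} Y_t^{(m)}\log\lambda^{(m)}(t)$, the ``compensator'' piece $-\sum_m\sum_l K_{l,m}\sum_{t\in\tau^{(l)}} Y_t^{(l)}\bigl(1-(1-\beta_{l,m})^{N_\delta-t}\bigr)$, and the background piece $-\sum_m N_\delta\mu^{(m)}$ --- I would differentiate each piece with respect to a generic parameter $\phi\in\{\mu^{(p)},K_{p,q},\beta_{p,q}\}$ using the chain rule, recalling that $G_{p,q}(N_\delta-t)=1-(1-\beta_{p,q})^{N_\delta-t}$.

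The first key observation is that $\partial\lambda^{(m)}(t)/\partial\phi$ vanishes unless $\phi$ acts on dimension $m$: for $\mu^{(p)}$ this forces $m=p$, while for $K_{p,q}$ and $\beta_{p,q}$ it forces $m=q$ through the indicator $\mathbbm{1}(m=q)$. This collapses the outer sum over $m$ in the score piece to a single dimension, and since $Y_t^{(m)}=0$ for $t\notin\tau^{(m)}$ the surviving sum is automatically restricted to $t\in\tau^{(m)}$, producing the factor $Y_t^{(m)}/\lambda^{(m)}(t)$ multiplied by $\partial\lambda^{(m)}(t)/\partial\phi$. Reading $\partial\lambda^{(m)}(t)/\partial\phi$ off Equation \eqref{Hawkes lambda} then gives the three inner expressions: $1$ for $\mu^{(p)}$; $\sum_{i:t_{p,i}\le t-1}Y_{t_{p,i}}^{(p)}g_{p,q}(t-t_{p,i})$ for $K_{p,q}$; and $K_{p,q}\sum_{i:t_{p,i}\le t-1}Y_{t_{p,i}}^{(p)}\,\partial_{\beta_{p,q}}g_{p,q}(t-t_{p,i})$ for $\beta_{p,q}$.

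For the compensator and background pieces, only the single summand carrying the differentiation variable survives: differentiating $-\sum_m N_\delta\mu^{(m)}$ in $\mu^{(p)}$ yields $-N_\delta$, while in the compensator piece only the $(l,m)=(p,q)$ term depends on $K_{p,q}$ and $\beta_{p,q}$, yielding $-\sum_{t\in\tau^{(p)}}Y_t^{(p)}G_{p,q}(N_\delta-t)$ and $-K_{p,q}\sum_{t\in\tau^{(p)}}Y_t^{(p)}\,\partial_{\beta_{p,q}}G_{p,q}(N_\delta-t)$ respectively. Assembling these with the score contributions reproduces the three displayed formulas.

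The only genuinely nontrivial step is the passage from $\sum_{t=1}^{N_\delta}\partial_\phi\lambda^{(m)}(t)$ to the clean cumulative form in $G_{p,q}$; this is exactly the interchange-of-summation (telescoping over discrete time) identity already carried out in Proposition \ref{First likelihood prop}. By starting from the reduced likelihood of Corollary \ref{corollary appendix}, in which this reduction has already been performed, the $G_{p,q}$ terms appear directly and no re-derivation is needed --- this is what makes the whole computation ``by the chain rule.'' Were one instead to differentiate the raw form \eqref{log-likelihood 1}, the same swap would have to be reproduced verbatim, which is the one place where care is required.
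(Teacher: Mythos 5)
Your proposal is correct and matches the paper's approach: the paper's own proof is simply ``by application of the chain rule,'' and your term-by-term differentiation of the reduced log-likelihood from Corollary \ref{corollary appendix} is exactly the computation that one-line proof is gesturing at. Your additional observation --- that starting from the reduced form lets the $G_{p,q}$ terms appear without re-deriving the summation interchange of Proposition \ref{First likelihood prop} --- is a faithful and slightly more explicit account of the same argument.
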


\begin{proof}
    By application of the chain rule.
\end{proof}

\begin{prop}\label{corr 1 - no alarm}
If we define the intensity function as in Equation \eqref{Hawkes lambda}, we can write the gradient of the log-likelihood function as follows; for $j\in \{1,\ldots,N_l\}$ and $l \in \{1,\ldots,M\}$,

\begin{equation}
    \frac{\partial}{\partial \mu^{(p)}} \log L(\theta|\tau) = \sum_{j:t_{p,j} \in \tau^{(p)}} \frac{Y_{t_{p,j}}^{(p)}}{\mu^{(p)} + \sum_{l=1}^M k_{l,p} R(j,l,p)} - N_{\delta}
\end{equation}

\begin{equation}
    \frac{\partial}{\partial K_{p,q}} \log L(\theta|\tau) = \sum_{j:t_{q,j} \in \tau^{(q)}} \frac{Y_{t_{q,j}}^{(q)} R(j,p,q)}{\mu^{(q)} + \sum_{l=1}^M k_{l,q} R(j,l,q)} 
    -\sum_{t\in \tau^{(p)}}Y_t^{(p)}(1-(1-\beta_{p,q})^{N_{\delta}-t})
\end{equation}

\begin{multline}
    \frac{\partial}{\partial \beta_{p,q}} \log L(\theta|\tau)
     = \sum_{j:t_{q,j} \in \tau^{(q)}} \frac{Y_{t_{q,j}}^{(q)}}{\mu^{(q)} + \sum_{l=1}^M k_{lm} R(j,l,q)} K_{p,q} \Big( R^{1}(j,p,q)-R^{2}(j,p,q)\Big) \\   
    -\sum_{t\in \tau^{(p)}} k_{p,q} Y^{(p)}_t (N_{\delta}-t)(1-\beta_{p,q})^{N_{\delta}-t-1}.
\end{multline}
$R$ is defined as in Proposition \ref{Recursion 1- no alarm} and $R^1$, $R^2$ are defined as follows: $R^{1}(1,l,m):=R^{2}(1,l,m):=0$ $\forall l,m \in {1,...,M}$ and we define recursively, for $j\in \{1,\ldots,N_l\}$, 

\begin{equation}
    R^{1}(j+1,l,m) := (1-\beta_{l,m})^{t_{l,j+1}-t_{l,j}}R^{1}(j,l,m)+Y_{t_{l,j}}^{(l)}(1-\beta_{l,m})^{t_{l,j+1}-t_{l,j}-2},
\end{equation}
 and 
\begin{equation}
    R^{2}(j+1,l,m) := \beta_{l,m}(t_{l,j+1}-t_{l,j})R^{1}(j+1,l,m)+(1-\beta_{l,m})^{t_{l,j+1}-t_{l,j}}R^2(j,l,m). 
\end{equation}

\end{prop}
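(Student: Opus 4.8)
The plan is to start from the closed-form gradient expressions in Lemma~\ref{partials - no alarm} and replace each inner convolution-type sum by the recursive quantities of Proposition~\ref{Recursion 1- no alarm}, mirroring the proof of Proposition~\ref{corr 1} in the marked case. The $\mu$ and $K$ derivatives require the least work. In both, the conditional intensity $\lambda^{(q)}(t_{q,j})$ in the denominator is rewritten through Proposition~\ref{Recursion 1- no alarm} as $\mu^{(q)}+\sum_{l}K_{l,q}R(j,l,q)$, and the inner sum $\sum_{i:t_{p,i}<t}Y^{(p)}_{t_{p,i}}g_{p,q}(t-t_{p,i})$ that multiplies $Y^{(q)}_t/\lambda^{(q)}(t)$ in the $K$-derivative is, with $t=t_{q,j}$ and $g_{p,q}(s)=\beta_{p,q}(1-\beta_{p,q})^{s-1}$, exactly $R(j,p,q)$. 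The boundary term is left in closed form after substituting $G_{p,q}(N_{\delta}-t)=1-(1-\beta_{p,q})^{N_{\delta}-t}$.

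The $\beta$ derivative is where the real computation lies. First I would differentiate the kernel and its cumulative sum explicitly, obtaining
\begin{equation*}
  \frac{\partial}{\partial\beta_{p,q}}g_{p,q}(s)=(1-\beta_{p,q})^{s-2}\bigl(1-\beta_{p,q}s\bigr),\qquad
  \frac{\partial}{\partial\beta_{p,q}}G_{p,q}(N_{\delta}-t)=(N_{\delta}-t)(1-\beta_{p,q})^{N_{\delta}-t-1},
\end{equation*}
with $s=t-t_{p,i}$, so the latter directly produces the closed-form boundary term. The key manipulation is to split the first expression as $(1-\beta_{p,q})^{s-2}-\beta_{p,q}s\,(1-\beta_{p,q})^{s-2}$, so that the inner sum $\sum_{i:t_{p,i}<t}Y^{(p)}_{t_{p,i}}\partial_{\beta_{p,q}}g_{p,q}(t-t_{p,i})$ separates into two sums which I would identify with $R^{1}(j,p,q)$ and $R^{2}(j,p,q)$, giving the claimed difference $R^{1}(j,p,q)-R^{2}(j,p,q)$.

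It then remains to verify that the partial sums
\begin{equation*}
  R^{1}(j,l,m)=\sum_{i:t_{l,i}<t_{l,j}}Y^{(l)}_{t_{l,i}}(1-\beta_{l,m})^{t_{l,j}-t_{l,i}-2},\qquad
  R^{2}(j,l,m)=\sum_{i:t_{l,i}<t_{l,j}}Y^{(l)}_{t_{l,i}}(1-\beta_{l,m})^{t_{l,j}-t_{l,i}-2}\beta_{l,m}(t_{l,j}-t_{l,i})
\end{equation*}
obey the stated recursions. The $R^{1}$ recursion follows by the standard factor-out-$(1-\beta_{l,m})^{t_{l,j+1}-t_{l,j}}$ step used throughout this appendix. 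The main obstacle is the $R^{2}$ recursion, because it contains the factor $(t_{l,j}-t_{l,i})$, which is not a function of the lag between consecutive events alone. The trick is to write $t_{l,j+1}-t_{l,i}=(t_{l,j+1}-t_{l,j})+(t_{l,j}-t_{l,i})$ inside the sum defining $R^{2}(j+1,l,m)$: the first piece, once the $j$-th summand is reattached, assembles into $\beta_{l,m}(t_{l,j+1}-t_{l,j})R^{1}(j+1,l,m)$, while the second piece factors cleanly into $(1-\beta_{l,m})^{t_{l,j+1}-t_{l,j}}R^{2}(j,l,m)$, yielding precisely the coupled recursion in which $R^{2}$ is advanced using the already-updated $R^{1}(j+1,l,m)$. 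Substituting these recursive forms into the $\beta$ derivative completes the proof.
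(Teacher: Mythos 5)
Your proposal is correct and follows essentially the same route as the paper: substitute the recursive forms of $\lambda^{(q)}$ and the inner kernel sums into Lemma~\ref{partials - no alarm} for the $\mu$ and $K$ derivatives, differentiate $g$ and $G$ explicitly for the $\beta$ derivative, split $\partial_{\beta}g$ into the two partial sums defining $R^{1}$ and $R^{2}$, and establish the coupled recursion via the decomposition $t_{l,j+1}-t_{l,i}=(t_{l,j+1}-t_{l,j})+(t_{l,j}-t_{l,i})$. The only cosmetic difference is bookkeeping of the $i=j$ summand (it sits inside $R^{1}(j+1,l,m)$ by definition in the first piece and vanishes in the second because of the factor $t_{l,j}-t_{l,j}=0$), which does not affect the validity of the argument.
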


\begin{proof}
     
For the first two sets of partial derivatives, we can simply replace  replace $\sum_{i:t_{l,i}<t}Y_{t_{l,i}}^{(p)}g_{p,q}(t-t_{l,i})$ in Lemma \ref{partials - no alarm} by its recursive definition as defined in the proof of Proposition \ref{Recursion 1- no alarm}, and write the intensity function in its recursive form as defined in Proposition \ref{Recursion 1- no alarm}. Now, note that 

\begin{align*} 
\frac{\partial}{\partial \beta_{l,m}} G(N_{\delta}-t) &= \frac{\partial}{\partial \beta_{l,m}} \Big( 1- (1-\beta_{l,m})^{N_{\delta}-t} \Big) \\ 
&=  (N_{\delta}-t)(1-\beta_{l,m})^{N_{\delta}-t-1}
\end{align*}

and 

\begin{align*}
    \frac{\partial}{\partial \beta_{l,m}} g(t-t_{l,i}) &= \frac{\partial}{\partial \beta_{l,m}} \Big( \beta_{l,m}(1-\beta_{l,m})^{t-t_{l,i}-1} \Big) \\
   &=  (1-\beta_{l,m})^{t-t_{l,i}-2}(1-\beta_{l,m}(t-t_{l,i})).
\end{align*}

Hence, we can write Equation \eqref{beta gradient - no alarm} as

\begin{multline}\label{beta gradient sub - no alarm}
    \frac{\partial}{\partial \beta_{p,q}} \log L(\theta|\tau)
    = \sum_{t \in \tau^{(q)}}  \frac{Y_t^{(q)}}{\lambda_t^{(q)}}  K_{p,q} \sum_{i:t_{l,i}<t} Y_{t_{l,i}}^{(p)} (1-\beta_{p,q})^{t-t_{l,i}-2}(1-\beta_{l,m}(t-t_{l,i}))  \\
    -\sum_{t\in \tau^{(p)}}  k_{p,q} Y^{(p)}_t (N_{\delta}-t)(1-\beta_{l,m})^{\textbf{}-t-1}
\end{multline}

Now, define 
\begin{align}
    R^1(j,l,m)&:=\sum_{i:t_{l,i}<t_{l,j}}Y_{t_{l,i}}^{(l)}(1-\beta_{l,m})^{t_{l,j}-t_{l,i}-2} \\ 
    R^{2}(j,l,m)&:=\sum_{i:t_{l,i}<t_{l,j}}Y_{t_{l,i}}^{(l)}(1-\beta_{l,m})^{t_{l,j}-t_{l,i}-2}\beta_{l,m}(t_{l,j}-t_{l,i}). 
\end{align}

Then, noting that $R^1(1,l):=R^2(1,l)=0$, we have that, 

\begin{align*}
    R^1(j+1,l,m)&=\sum_{i:t_{l,i}<t_{j+1}}Y_{t_{l,i}}^{(l)}(1-\beta_{l,m})^{t_{j+1}-t_{l,i}-2} \\ 
    &=(1-\beta_{l,m})^{t_{j+1}-t_{l,j}}\sum_{i:t_{l,i}<t_{l,j}}Y_{t_{l,i}}^{(l)}(1-\beta_{l,m})^{t_{l,j}-t_{l,i}-2} + Y_{t_{l,j}}^{(l)}(1-\beta_{l,m})^{t_{j+1}-t_{l,j}-2} \\
    &=(1-\beta_{l,m})^{t_{j+1}-t_{l,j}}R^1(j,1,m) + Y_{t_{l,j}}^{(l)}(1-\beta_{l,m})^{t_{j+1}-t_{l,j}-2}
\end{align*}

and that 

\begin{align*}
    R^{2}(j+1,l,m)&:=\sum_{i:t_{l,i}<t_{l,j}}Y_{t_{l,i}}^{(l)}(1-\beta_{l,m})^{t_{l,j}-t_{l,i}-2}\beta_{l,m}(t_{l,j}-t_{l,i}) \\ 
    &=(1-\beta_{l,m})^{t_{j+1}-t_{l,j}} \sum_{t_{l,i}<t_{j+1}}Y_{t_{l,i}}^{(l)}(1-\beta_{l,m})^{t_{l,j}-t_{l,i}-2}\beta_{l,m}(t_{j+1}-t_{l,i}) \\
    &=(1-\beta_{l,m})^{t_{j+1}-t_{l,j}} \sum_{t_{l,i}<t_{j+1}}Y_{t_{l,i}}^{(l)}(1-\beta_{l,m})^{t_{l,j}-t_{l,i}-2}\beta_{l,m}(t_{j+1}-t_{l,j}+t_{l,j}-t_{l,i}) \\ 
    &=(1-\beta_{l,m})^{t_{j+1}-t_{l,j}} \sum_{t_{l,i}<t_{j+1}}Y_{t_{l,i}}^{(l)}(1-\beta_{l,m})^{t_{l,j}-t_{l,i}-2}\beta_{l,m}(t_{l,j}-t_{l,i})\\ 
    &\hspace{1cm}  + (1-\beta_{l,m})^{t_{j+1}-t_{l,j}} \sum_{t_{l,i}<t_{j+1}}Y_{t_{l,i}}^{(l)}(1-\beta_{l,m})^{t_{l,j}-t_{l,i}-2}\beta_{l,m}(t_{j+1}-t_{l,j}) \\ 
    &=(1-\beta_{l,m})^{t_{j+1}-t_{l,j}} \sum_{t_{l,i}<t_{j+1}}Y_{t_{l,i}}^{(l)}(1-\beta_{l,m})^{t_{l,j}-t_{l,i}-2}\beta_{l,m}(t_{l,j}-t_{l,i})\\ 
    &\hspace{1cm}  + \sum_{t_{l,i}<t_{j+1}}Y_{t_{l,i}}^{(l)}(1-\beta_{l,m})^{t_{j+1}-t_{l,i}-2}\beta_{l,m}(t_{j+1}-t_{l,j}) \\ 
    &=(1-\beta_{l,m})^{t_{j+1}-t_{l,j}} \sum_{t_{l,i}<t_{j+1}}Y_{t_{l,i}}^{(l)}(1-\beta_{l,m})^{t_{l,j}-t_{l,i}-2}\beta_{l,m}(t_{l,j}-t_{l,i})\\ 
    &\hspace{1cm}  + \beta_{l,m}(t_{j+1}-t_{l,j}) \sum_{t_{l,i}<t_{j+1}}Y_{t_{l,i}}^{(l)}(1-\beta_{l,m})^{t_{j+1}-t_{l,i}-2} \\ 
    &=(1-\beta_{l,m})^{t_{j+1}-t_{l,j}} \sum_{t_{l,i}<t_{l,j}}Y_{t_{l,i}}^{(l)}(1-\beta_{l,m})^{t_{l,j}-t_{l,i}-2}\beta_{l,m}(t_{l,j}-t_{l,i})\\ 
    &\hspace{1cm} + Y_{t_{l,j}}^{(l)}(1-\beta_{l,m})^{-2}\beta_{l,m}(t_{l,j}-t_{l,j}) \\
    &\hspace{2cm} + \beta_{l,m}(t_{j+1}-t_{l,j}) \sum_{t_{l,i}<t_{j+1}}Y_{t_{l,i}}^{(l)}(1-\beta_{l,m})^{t_{j+1}-t_{l,i}-2} \\ 
    &=(1-\beta_{l,m})^{t_{j+1}-t_{l,j}} \sum_{t_{l,i}<t_{l,j}}Y_{t_{l,i}}^{(l)}(1-\beta_{l,m})^{t_{l,j}-t_{l,i}-2}\beta_{l,m}(t_{l,j}-t_{l,i})\\
    &\hspace{1cm} + \beta_{l,m}(t_{j+1}-t_{l,j}) \sum_{t_{l,i}<t_{j+1}}Y_{t_{l,i}}^{(l)}(1-\beta_{l,m})^{t_{j+1}-t_{l,i}-2} \\
    &=(1-\beta_{l,m})^{t_{j+1}-t_{l,j}} R^2(j,l,m) + \beta_{l,m}(t_{j+1}-t_{l,j}) R^1(j+1,l,m)
\end{align*}

By plugging these recursive relations into Equation \eqref{beta gradient sub - no alarm}, we obtain the required result.

\end{proof}

\section{Forecasting for Multivariate Discrete Hawkes Processes}\label{Hawkes forecasting}

\begin{algorithm}
	\caption{Naive algorithm to simulate from a discrete-time Hawkes process with geometric excitation kernel.}
	\label{alg:Naive discrete Hawkes}
	\begin{algorithmic}[1]
	\State \textbf{Require} $\mu^{(1)},\ldots,\mu^{(M)}, K_{1,1},\ldots,K_{M,M}, \beta_{1,1},\ldots,\beta_{M,M}$
	\State Generate $Y^{(m)}_1 \gets \Poisson(\mu^{(m)})$, for each $m=1,\ldots,M$
		\For {$t=2,\ldots,N_{\delta}$}
			\For {$l=1,2,\ldots,M$}
				\State $\lambda^{(m)}(t) \gets \mu^{(m)} + \sum^M_{l=1}{K_{l,m}\sum_{i:t_{l,i}<t}{Y^{(l)}_{t_{l,i}} \beta_{l,m}(1-\beta_{l,m})^{t-t_{l,i}-1}}}$
				\State Generate $Y^{(m)}_t \gets \Poisson(\lambda^{(m)}(t))$
			\EndFor
		\EndFor
	\end{algorithmic} 
\end{algorithm}

\newpage
\begin{lem}\label{R3}
We can write the intensity function as defined by Equation \eqref{Hawkes lambda}, for each discrete time step $t=1,2,\ldots,N_{\delta}$, as follows 

\begin{equation}
    \lambda^{(m)}(t) = \mu^{(m)}+\sum^M_{l=1}K_{lm}R^3(t,l,m),
\end{equation}

where $\forall l=\{1,...,M\}$ we define $R^3(1,l,m):=0$ and

\begin{equation}
    R^3(t+1,l,m)=(1-\beta_{l,m})R^3(t,l,m)+Y^{(l)}_t\beta_{l,m}, \; \quad \text{for} \hspace{0.1cm} t=1,2,\ldots,N_{\delta}-1. 
\end{equation}

\end{lem}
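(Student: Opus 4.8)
The plan is to mirror the argument in the proof of Lemma~\ref{R3 - with alarm}, specializing to the unmarked kernel and allowing the decay parameter $\beta_{l,m}$ to depend on the ordered pair of dimensions. First I would \emph{define} the quantity
$$R^3(t,l,m) := \sum_{i:t_{l,i}\leq t-1} Y^{(l)}_{t_{l,i}} \beta_{l,m}(1-\beta_{l,m})^{t-t_{l,i}-1}, \qquad t=1,\ldots,N_{\delta},$$
so that the claimed identity $\lambda^{(m)}(t)=\mu^{(m)}+\sum_{l=1}^M K_{l,m}R^3(t,l,m)$ holds by construction: it is exactly the regrouping of Equation~\eqref{Hawkes lambda} in which the inner sum over past events is named $R^3$. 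All the substance of the lemma therefore reduces to verifying that this $R^3$ obeys the stated one-step recursion together with the stated initial value.

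For the recursion, the key step is to isolate the contribution of events occurring exactly in the interval $(t-1,t]$. Concretely I would increment the time index, raise the exponent by one, and split the summation range $\{i : t_{l,i}\le t\}$ into $\{i : t_{l,i}\le t-1\}$ and the single index (if any) at $t_{l,i}=t$:
\begin{align*}
R^3(t+1,l,m) &= \sum_{i:t_{l,i}\le t} Y^{(l)}_{t_{l,i}}\beta_{l,m}(1-\beta_{l,m})^{t-t_{l,i}} \\
&= (1-\beta_{l,m})\sum_{i:t_{l,i}\le t-1} Y^{(l)}_{t_{l,i}}\beta_{l,m}(1-\beta_{l,m})^{t-t_{l,i}-1} + Y^{(l)}_t\beta_{l,m} \\
&= (1-\beta_{l,m})R^3(t,l,m) + Y^{(l)}_t\beta_{l,m},
\end{align*}
where factoring $(1-\beta_{l,m})$ out of the first sum recovers $(1-\beta_{l,m})R^3(t,l,m)$, and the peeled-off term, carrying exponent $(1-\beta_{l,m})^0=1$, is precisely $Y^{(l)}_t\beta_{l,m}$. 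When no event occurs in $(t-1,t]$ we have $Y^{(l)}_t=0$, so the residual term vanishes and the decomposition remains valid.

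Finally I would check the base case: $R^3(1,l,m)$ is a sum over $\{i : t_{l,i}\le 0\}$, which is empty, hence $R^3(1,l,m)=0$, matching the stated initialization. I do not anticipate any genuine obstacle, as this is a routine telescoping-and-reindexing computation, structurally identical to the marked case already established in Appendix~\ref{R3 - with alarm - proof}. The only point requiring a little care is the bookkeeping between the event-time index $t_{l,i}$ and the discrete time step $t$, namely confirming that the term separated off at $t_{l,i}=t$ equals exactly $Y^{(l)}_t\beta_{l,m}$ and that the convention $Y^{(l)}_t=0$ on empty intervals keeps the identity consistent for every $t$.
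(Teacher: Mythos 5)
Your proposal is correct and follows essentially the same route as the paper's own proof in Appendix~\ref{R3 - with alarm - proof} and the proof of Lemma~\ref{R3}: define $R^3$ as the inner sum of the intensity, peel off the term at $t_{l,i}=t$ (which carries exponent zero and equals $Y^{(l)}_t\beta_{l,m}$), and factor $(1-\beta_{l,m})$ out of the remaining sum, with the empty-sum base case giving $R^3(1,l,m)=0$. No gaps.
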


\begin{proof}
    Define $R^3(t,l,m):=\sum_{i:t_{l,i}<t}{Y^{(l)}_{t_{l,i}} \beta_{l,m}(1-\beta_{l,m})^{t-t_{l,i}-1}}, \forall t=1,\ldots,N_{\delta}$. Then, we have that

\begin{align*}
    R^3(t+1,l,m)&=\sum_{i:t_{l,i}<t+1}{Y^{(l)}_{t_{l,i}} \beta_{l,m}(1-\beta_{l,m})^{t-t_{l,i}}} \\
    &= \sum_{i:t_{l,i}<t}{Y^{(l)}_{t_{l,i}} \beta_{l,m}(1-\beta_{l,m})^{t-t_{l,i}}} + Y^{(l)}_t\beta_{l,m}(1-\beta_{l,m})^t-t \\
    &=\sum_{i:t_{l,i}<t}{Y^{(l)}_{t_{l,i}} \beta_{l,m}(1-\beta_{l,m})^{t-t_{l,i}}} + Y^{(l)}_t\beta_{l,m} \\
    &=(1-\beta_{l,m})\sum_{i:t_{l,i}<t}{Y^{(l)}_{t_{l,i}} \beta_{l,m}(1-\beta_{l,m})^{t-t_{l,i}-1}} + Y^{(l)}_t\beta_{l,m}\\
    &=(1-\beta_{l,m})R^3(t,l,m) + Y^{(l)}_t\beta_{l,m}
\end{align*}

Hence, we obtain 

\begin{equation*}
    \lambda^{(m)}_t = \mu^{(m)}+\sum^M_{l=1}K_{lm}R^3(t,l), \forall t=1,\ldots,N_{\delta}.
\end{equation*}

Finally, noting that $R^3(1,l,m)=0$ $\forall m=\{1,...,M\}$, we obtain the required result.

\end{proof}

\begin{algorithm}
	\caption{Recursive algorithm to simulate from a discrete-time Hawkes process with Geometric excitation kernel.}
	\label{alg:Recursive}
	\begin{algorithmic}[1]
	\State \textbf{Require} $\mu^{(1)},\ldots,\mu^{(M)}, K_{1,1},\ldots,K_{M,M}, \beta_{1,1},\ldots,\beta_{M,M}$
	\State Generate $Y^{(m)}_1 \gets \Poisson(\mu^{(m)})$ for $m=1,\ldots,M$
	\State Set $R^3(1,l,m) \gets 0$, $\forall l,m=\{1,\ldots,M\}$
		\For {$t=2,\ldots,N_{\delta}$}
			\For {$m=1,\ldots,M$}
			\For {$l=1,\ldots,M$}
			    \State $R^3(t,l,m)=(1-\beta_{l,m})R^3(t-1,l,m)+Y^{(l)}_{t-1}\beta_{l,m}$
			    \EndFor
				\State $\lambda^{(m)}(t) \gets \mu^{(m)} + \sum^M_{l=1}{K_{l,m}R^3(t,l,m)}$
				\State Generate $Y^{(m)}_t \gets \Poisson(\lambda^{(m)}(t))$
		 \EndFor
		\EndFor
	\end{algorithmic} 
\end{algorithm}

\end{appendices}

\end{document}